\title{Improved Distributed Algorithms for the Lov\'asz Local Lemma and Edge Coloring}
\author{}
\newtheorem{theorem}{Theorem}
\newtheorem{corollary}[theorem]{Corollary}
\newtheorem{lemma}[theorem]{Lemma}
\newtheorem{definition}[theorem]{Definition}
\newtheorem{notation}[theorem]{Notation}
\newenvironment{proofs}{%
	\proof}{\endproof}
\newcommand{\nat}{\ensuremath{\mathbb{N}}}
\newcommand{\Prob}[1]{\mathbf{Pr}\left[#1\right]}
\newcommand{\Pru}[2]{\mathbf{Pr}_{#1}\left[#2\right]}
\def\epsilon{\ensuremath{\varepsilon }}
\newcommand{\eps}{\ensuremath{\epsilon }}
\newcommand{\Vars}{\ensuremath{\mathcal V }}
\newcommand{\vars}{\ensuremath{\alpha }}
\newcommand{\ents}{\ensuremath{\mathcal X }}
\newcommand{\rl}{\text{randLOCAL}}
\newcommand{\fin}{{\ensuremath{F}} }
\newcommand{\Part}{\ensuremath{\Phi }}
\newcommand{\aG}{\ensuremath{G_\vars^\ents}}
\newcommand{\Expu}[2]{\mathbf{E}_{#1}\left[#2\right]}
\newcommand{\ca}{{\ensuremath{2.5}} }
\newcommand{\ad}{{\ensuremath{d_\vars}}}
\newcommand{\cb}{{\ensuremath{27}} }
\newcommand{\cc}{{\ensuremath{2.5}} }
\newcommand{\COMMENTED}[1]{{}}
\newcommand{\junk}[1]{\COMMENTED{#1}}
\newcommand{\local}{\textsf{LOCAL}\xspace}
\newcommand{\LOCAL}{\local}
\date{}
\author{Peter Davies\\Durham University\\\url{peter.w.davies@durham.ac.uk}}
\begin{document}

\maketitle

\begin{abstract}
The Lov\'asz Local Lemma is a classic result in probability theory that is often used to prove the existence of combinatorial objects via the probabilistic method. In its simplest form, it states that if we have $n$ `bad events', each of which occurs with probability at most $p$ and is independent of all but $d$ other events, then under certain criteria on $p$ and $d$, all of the bad events can be avoided with positive probability.

While the original proof was existential, there has been much study on the algorithmic Lov\'asz Local Lemma: that is, designing an algorithm which finds an assignment of the underlying random variables such that all the bad events are indeed avoided. Notably, the celebrated result of Moser and Tardos [JACM '10] also implied an efficient distributed algorithm for the problem, running in $O(\log^2 n)$ rounds. For instances with low $d$, this was improved to $O(d^2+\log^{O(1)}\log n)$ by Fischer and Ghaffari [DISC '17], a result that has proven highly important in distributed complexity theory (Chang and Pettie [SICOMP '19]).
	
We give an improved algorithm for the Lov\'asz Local Lemma, providing a trade-off between the strength of the criterion relating $p$ and $d$, and the distributed round complexity. In particular, in the same regime as Fischer and Ghaffari's algorithm, we improve the round complexity to $O(\frac{d}{\log d}+\log^{O(1)}\log n)$. At the other end of the trade-off, we obtain a $\log^{O(1)}\log n$ round complexity for a substantially wider regime than previously known.

As our main application, we also give the first $\log^{O(1)}\log n$-round distributed algorithm for the problem of $\Delta+o(\Delta)$-edge coloring a graph of maximum degree $\Delta$. This is an almost exponential improvement over previous results: no prior $\log^{o(1)} n$-round algorithm was known even for $2\Delta-2$-edge coloring.
\end{abstract}

\section{Introduction}

Our main focus in this paper is on distributed algorithms for the the Lov\'asz Local Lemma.

\subsection{The Lov\'asz Local Lemma}
The Lov\'asz Local Lemma (LLL) is a classic result in probability theory, often used to prove the existence of combinatorial objects by the probabilistic method. Its setup is as follows: 

Consider a set \Vars\ of independent random variables, and a family \ents\ of $n$ (bad) events on these variables; we wish to avoid satisfying any of these events. Each event $A \in\ents $ depends on some subset $\Vars(A) \subseteq \Vars$ of the variables. Define the dependency graph $G^\ents = (\ents , \{\{A, B\} \mid \Vars(A) \cap \Vars(B) \ne \emptyset\})$, i.e. the node set is the set of events, and events are connected by edges if they depend on at least one of the same random variables. Let $d$ denote the maximum degree in this graph: that is, each event $A \in \ents$ shares variables with at most $d$ other events $B \in \ents$. Finally, define $p = \max_{A\in\ents}\Prob{A}$, i.e., an upper bound on the probability of any particular bad event occurring.

The (symmetric\footnote{The symmetric LLL is a special case of the more general aymmetric version, which allows for differing probabilities and dependency degrees between events. In this paper we study only the symmetric LLL, and leave extension to the asymmetric case for future work.}) Lov\'asz Local Lemma then states the following:

\begin{theorem} [Lov\'asz Local Lemma \cite{EL74,Shearer85}]
	If $epd\le 1$, then there exists an assignment of the random variables that avoids all bad events.
\end{theorem}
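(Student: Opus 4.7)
My plan is to follow the classical inductive argument of Erdős and Lovász. The heart of the proof is to establish, by strong induction on $|S|$, a quantitative bound of the form
\[
\Prob{A \mid \bigcap_{B\in S}\bar B} \le q
\]
for every event $A \in \ents$ and every $S \subseteq \ents\setminus\{A\}$, where $q$ is a carefully chosen constant (say $q = 1/(d+1)$) for which the hypothesis $epd \le 1$ makes the induction close. The base case $S = \emptyset$ is immediate from $\Prob{A} \le p$.

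For the inductive step I would partition $S = S_1 \cup S_2$, with $S_1 = \{B \in S : \Vars(A) \cap \Vars(B) \ne \emptyset\}$ (the dependency-graph neighbors of $A$ that lie in $S$, so $|S_1|\le d$) and $S_2 = S\setminus S_1$ (events jointly independent of $A$, because none shares a variable with $A$). Applying Bayes' rule gives
\[
\Prob{A \mid \bigcap_{B\in S}\bar B} = \frac{\Prob{A \cap \bigcap_{B\in S_1}\bar B \mid \bigcap_{B\in S_2}\bar B}}{\Prob{\bigcap_{B\in S_1}\bar B \mid \bigcap_{B\in S_2}\bar B}}.
\]
The numerator is at most $\Prob{A \mid \bigcap_{B\in S_2}\bar B} = \Prob{A} \le p$, using that $A$ is independent of the events in $S_2$. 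The denominator is at least $1 - \sum_{B\in S_1}\Prob{B \mid \bigcap_{C\in S_2}\bar C} \ge 1 - d q$ by a union bound combined with the inductive hypothesis, which applies because each term in the sum has a strictly smaller conditioning set. It then suffices to pick $q$ so that $p/(1 - d q) \le q$; the assumption $epd \le 1$ is exactly what permits such a choice.

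I would finish via the chain rule,
\[
\Prob{\bigcap_{i=1}^n \bar A_i} = \prod_{i=1}^n \Prob{\bar A_i \mid \bar A_1 \cap \cdots \cap \bar A_{i-1}} \ge (1-q)^n > 0,
\]
so the set of variable assignments avoiding every bad event has positive probability and is therefore nonempty. The only real obstacle is the constant-chasing in the inductive step: one must verify that some choice of $q$ closes the induction precisely under $epd \le 1$, rather than under a weaker hypothesis such as $4pd \le 1$ that would only yield a bound of the form $\Prob{A \mid \cdots}\le 2p$; the combinatorial structure of the argument is otherwise routine.
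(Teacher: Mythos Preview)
The paper does not prove this theorem; it is quoted as classical background (with citations to Erd\H{o}s--Lov\'asz and Shearer), and the paper's own work is entirely about \emph{algorithmic} versions of the LLL. So there is no in-paper proof to compare against, and your plan should be judged on its own.

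Your outline is the standard inductive argument and is structurally sound, but the denominator step as you wrote it does not reach the criterion $epd\le 1$. A union bound gives only $\Prob{\bigcap_{B\in S_1}\bar B \mid \bigcap_{C\in S_2}\bar C}\ge 1-dq$, and the requirement $p/(1-dq)\le q$ optimizes at $q=1/(2d)$, yielding merely $4pd\le 1$. To get the constant $e$ one must expand the denominator by the chain rule,
\[
\Prob{\bigcap_{j=1}^{k}\bar B_j \,\Big|\, \bigcap_{C\in S_2}\bar C}=\prod_{j=1}^{k}\Bigl(1-\Prob{B_j \mid \bar B_1\cap\cdots\cap\bar B_{j-1}\cap\bigcap_{C\in S_2}\bar C}\Bigr)\ge(1-q)^{d},
\]
and apply the inductive hypothesis to each factor (each conditioning set still has fewer than $|S|$ elements). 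Optimizing $q(1-q)^d$ gives $q=1/(d+1)$ and the threshold $p\le d^{d}/(d{+}1)^{d+1}$, which is strictly \emph{less} than $1/(ed)$ since $(1+1/d)^{d+1}>e$; in clean form this is the familiar $ep(d+1)\le 1$. Reaching the sharper $epd\le 1$ that the paper states (and attributes to Shearer) genuinely requires Shearer's refinement and does not drop out of the basic induction, so your closing remark that constant-chasing is the ``only real obstacle'' is correct in spirit but understates the issue: for this exact statement it is a real gap, not bookkeeping.
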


We will call such an solution a valid assignment. The condition $epd\le 1$ is known as the LLL criterion - this particular version is due to Shearer \cite{Shearer85}, but other criteria are also studied (in particular, criteria with more `slack' often permit faster algorithms for finding the assignment, as we will discuss shortly). Note that the existence of a valid assignment does not depend on the number of bad events $n$, only on the degree of dependence $d$. So, the lemma is useful in situations when $n> 1/p$, in which case simply taking a union bound over all events would not give a positive probability of avoiding all of them.

Since the Lov\'asz Local Lemma is defined on a dependency graph, it makes sense to study the problem of finding a valid assignment as a distributed graph algorithm. We will be concerned with algorithms in the well-known \LOCAL\ model of distributed computing, first introduced by Linial \cite{Linial87}, and this is what we will mean when referring to distributed algorithms (we will not consider other distributed models such as \textsf{CONGEST} in this work).

\subsection{The \LOCAL Model of Distributed Computing}
The model is based on an undirected graph $G=(V,E)$. Nodes (also known as vertices) of the graph are allowed to perform unlimited computation on information they possess; initially they see only their adjacent edges. We will denote $n:=|V|$ to be the number of nodes in the graph, and $\Delta$ to be the maximum node degree.

Algorithms proceed in synchronous communication rounds, in which nodes may send unlimited messages to each of their neighbors. The goal is to minimize the number of communication rounds required to give a correct output at each node.

We will work in the randomized model \rl, in which nodes each have access to their own local stream of random bits. We are concerned with designing algorithms that succeed with high probability (w.h.p.) in $n$, i.e. with probability at least $1-n^{-c}$, for some $c\ge 1$.

\subsection{The Distributed Lov\'asz Local Lemma}
In the \LOCAL model is defined as follows. The underlying input/communication graph is $G^\ents$. This means that it is the bad events that are the nodes performing the computation in the \LOCAL\ model, and nodes can communicate directly iff their corresponding events share dependent variables. We wish to design algorithms to find a valid assignment of variables, under the conditions in which the Lov\'asz Local Lemma guarantees that such an assignment exists. At the beginning of the algorithm, nodes know only the details of their own bad event and dependent variables, and at the end they must output values for all their dependent variables. These values must be consistent (i.e. all nodes for events dependent on a particular variable must output the same value for that variable), and must avoid satisfying any bad event.

We note that the details of how exactly an instance of the LLL is represented in the \LOCAL\ model are not generally very important; there are other reasonable ways to conduct this representation, but all are equivalent up to a constant factor in algorithmic round complexities.

\junk{   The Lov\'asz Local Lemma shows that $\Prob{\cap_{A\in \ents} \bar A} > 0$, under the LLL criterion that $epd \le 1$. Intuitively, if a local union bound is satisfied around each node in $G^\ents$ , with some slack, then there is a positive
probability to avoid all bad events.

The distributed Lov\'asz Local Lemma has emerged as an important meta-problem in distributed graph algorithms. It is arrived at as follows: if a randomized distributed algorithm succeeds at each node with some probability that is
\begin{itemize}
	\item sufficiently high with respect to the maximum degree $\Delta$, but 
	\item not high enough to take a union bound over all $n$ nodes in the graph (i.e., not w.h.p. in $n$),
\end{itemize} then the Lov\'asz Local Lemma can be used to show that with non-zero probability the algorithm does indeed succeed globally - that is, there exists an assignment of each node's input randomness that produces a correct output for every node. This leaves an algorithmic problem: is there an efficient distributed process to \emph{find} such a correct output?}

The distributed Lov\'asz Local Lemma has proven to be key to the study of distributed complexity. For locally-checkable labelling problems (LCLs), on constant-degree graphs, it is the canonical complete problem for the class of problems with $\Theta(\log n)$ deterministic complexity and $\log^{\Theta(1)}\log n$ randomized complexity \cite{CP19}. It also has implications for derandomization \cite{GHK18}, and for connections between the \LOCAL model and descriptive combinatorics \cite{Bernshteyn21}. 

The wide scope of implications for distributed LLL algorithms stem from their utility as meta-algorithms for other problems. This is arrived at as follows: if a randomized distributed algorithm $A$ for some other problem $P$ succeeds at each node with some probability that is
\begin{itemize}
	\item sufficiently high with respect to the maximum degree $\Delta$, but 
	\item not high enough to take a union bound over all $n$ nodes in the graph (i.e., not w.h.p. in $n$),
\end{itemize} then the Lov\'asz Local Lemma can be used to show that with positive probability the algorithm does indeed succeed globally - that is, there exists an assignment of each node's input randomness such that $A$ produces a correct output for $P$ at every node. An LLL algorithm applied to this instance can then find such a valid assignment of input randomness, and simulating $A$ using this randomness solves $P$. In this way, LLL algorithms can be used to amplify the success probability of other algorithms.

\subsection{Distributed Edge Coloring}
Our main application of the LLL result we present will be to edge coloring. Edge coloring is often cited as one of the four classic local distributed graph problems, along with maximal independent set, maximal matching, and vertex coloring (e.g. \cite{PR01,FGK17}). The goal is to assign colors from a palette to all of the edges of the input graph, in such a way that no two edges which share an endpoint are given the same color. 

The size of the available palette greatly affects the difficulty of the problem. A sequential greedy algorithm can color all edges using $2\Delta-1$ colors, since any edge is adjacent to at most $2\Delta-2$ others so would always have a color available however its adjacent edges were colored. This property makes $2\Delta-1$-edge coloring particularly amenable to distributed algorithms. It is also a special case of $\Delta+1$-vertex coloring, since one can consider the problem as vertex coloring the line graph of the original input graph. 

A $poly(\log n)$-round deterministic \local\ algorithm for $2\Delta-1$-edge coloring was found by Fischer, Ghaffari, and Kuhn \cite{FGK17}, which, when combined with randomized algorithms of \cite{EPS15,BEPS16}, also implied a $poly(\log\log n)$-round randomized round complexity. Unlike for $\Delta+1$-vertex coloring, this came prior to the network decomposition result of Rozho\v{n} and Ghaffari \cite{RG20}, which implied $poly(\log n)$-round deterministic algorithms for many problems including the four listed above (a $\Delta+1$-vertex coloring algorithm which does not rely on network decomposition was, however, found later \cite{GK22}).

A further difference from $\Delta+1$-vertex coloring is that, while some graphs (namely cliques and odd cycles) require $\Delta+1$ colors to be properly vertex-colored, edge coloring can potentially be done using far fewer than the $2\Delta-1$ colors required by a greedy algorithm. Vizing's classic theorem \cite{Vizing64} states that all graphs can be edge colored using only $\Delta+1$ colors; however, such colorings are hard to find in a distributed fashion (though a recent paper by Bernshteyn \cite{Bernshteyn22} gives a $poly(\Delta,\log n)$-round deterministic algorithm for doing so).

It remains a major open question to determine how close one can get to the optimal $\Delta+1$ palette size with an efficient distributed edge-coloring algorithm. Here, `efficient' would most commonly mean $poly(\log n)$ rounds for a deterministic algorithm and $poly(\log\log n)$ rounds for a randomized algorithm, as for $2\Delta-1$-edge coloring, though round complexities parameterized by $\Delta$ are also of interest. 

\section{Previous Work}
In this section we discuss prior work on the Lov\'asz Local Lemma and edge coloring.

\subsection{Lov\'asz Local Lemma}
The Lov\'asz Local Lemma was first introduced and proven by Erdős and  Lovász \cite{EL74}. Sequential algorithms to find valid solutions began with Beck \cite{Beck91}, followed by a sequence of improvements \cite{Alon91,CS00,MR98,Moser09}, culminating in the celebrated result of Moser and Tardos \cite{MT10}.

\paragraph{Distributed Algorithms}

Moser and Tardos's result \cite{MT10} also implied the first distributed algorithm for the problem, running in $O(\log^2 n)$ rounds of \rl\ for LLL criterion $ep(d+1)< 1-\eps$ (for any constant $\eps>0$). However, the distributed LLL was not explicitly studied until the work of Chung, Pettie and Su \cite{CPS17}, which, using a similar approach, presented an $O( \log^2 d\log_{1/ep(d+1)} n)$-round algorithm for the criterion $ep(d+1)< 1$ (which was subsequently improved to $O(\log d\log_{1/ep(d+1)} n)$ rounds \cite{Ghaffari16}), and an $O(\log_{1/epd^2} n)$-round algorithm for the criterion $epd^2<1$. 

Taking a different algorithmic approach inspired by that of Molloy and Reed \cite{MR98}, Fischer and Ghaffari \cite{FG17} give improved algorithms for low-degree graphs, which were later extended by Ghaffari, Harris, and Kuhn \cite{GHK18}. An implication of these two works, combined with the subsequent polylogarithmic network decomposition of Rozho\v{n} and Ghaffari \cite{RG20}, is an algorithm for the constructive Lov\'asz Local Lemma taking $O(d^2+\log^{O(1)}\log n)$ rounds, under the criterion $p< d^{-c}$ for some sufficiently large constant $c$. We will call LLL criteria of this type \emph{polynomially-weakened}, following \cite{GHK18}. Note that, for polynomially-weakened criteria, the algorithm of Chung, Pettie and Su \cite{CPS17} requires only $O(\log_{d} n)$ rounds, and so the overall state-of-the-art round complexity is $O(\min\{d^2+\log^{O(1)}\log n,\log_d n\})$.

A lower bound of $\Omega(\log_{d} \log n)$ rounds is known \cite{BF+16}, even for the much weaker criterion $p\le 2^{-d}$, while the best algorithmic round complexity for that criterion remains the $O(\min\{d^2+\log^{O(1)}\log n,\frac{\log n}{d}\})$ given by the algorithms mentioned above \cite{CPS17, FG17,GHK18, RG20}. However, Brandt, Grunau and Rozho\v{n}\cite{BGR20} demonstrated that weakening the criterion any further makes the problem substantially more tractable, giving an $O(d^2+\log^* n)$-round deterministic algorithm for the criterion $p< 2^{-d}$.

\subsection{Edge Coloring}
As mentioned, the ability to color greedily makes $2\Delta-1$ edge coloring particularly amenable to distributed algorithms, and a randomized procedure of Elkin, Pettie and Schneider \cite{EPS15} combined with a deterministic algorithm of Fischer, Ghaffari, and Kuhn \cite{FGK17} yielded the first poly-log-logarithmic randomized round complexity of $O(\log^9 \log n)$ for the problem. This round complexity was later improved to $O(\log^6 \log n)$ \cite{GHK18} and then to $\tilde O(\log^3 \log n)$ \cite{Harris19}. Very recently, a $poly(\log\Delta) + O(\log^* n)$-round deterministic algorithm was also given for the problem \cite{BBKO22}.

With fewer colors, however, the complexity is less straightforward. The leading work is a beautiful result by Chang et al. \cite{CHLPU19}, who give an algorithm for $(1+\eps)\Delta$-edge coloring (for any function $\eps =\omega(\frac{\log^{2.5}\Delta}{\sqrt{\Delta}})$) based on the Lov\'asz Local Lemma, with a round complexity of $O(\log 1/\eps\cdot T_{LLL}(n,d,p)+\log^{O(1)}\log n)$. Here, $T_{LLL}(n,d,p)$ is the time required for a randomized LLL algorithm with parameters $n$, $d=\Delta^{O(1)}$, and $p=exp(-\eps^2\Delta/\log^{4+o(1)}\Delta )$. This bound has a wide variety of implications dependent on the parameters and LLL algorithm used, but we point out three regions of particular interest to us:

\begin{itemize}
	\item When using Fischer and Ghaffari's LLL algorithm \cite{FG17}, the round complexity is  $poly(\Delta, \log\log n)$.
	\item When using Chung, Pettie, and Su's LLL algorithm \cite{CPS17} (or indeed Moser and Tardos's \cite{MT10}), the round complexity is $poly(\log n)$.
	\item When $\eps = \Omega(\frac{\log^3 n}{\sqrt{\Delta}})$, no LLL algorithm is required (since all bad events are avoided with high probability under initial sampling), and the overall round complexity is $O(\log(1/\eps)+\log^* n)$. However, this only improves over a greedy coloring (i.e. uses fewer than $2\Delta-1$ colors) when $\Delta=\Omega(\log^6n)$.\footnote{These illustrative parameters are chosen for clarity rather than optimality, and so this threshold for $\Delta$ can be reduced somewhat, but is still necessarily above $\log n$. }
\end{itemize}

If our aim is an edge coloring with fewer than $2\Delta-1$ colors in $poly(\log\log n)$ \rl\ complexity, results are known only for when $\Delta$ is either $\log^{O(1)}\log n$ (using \cite{FG17}) or $\log^{\Omega(1)} n$ (setting $\eps= \log^{-O(1)} n$ such that no LLL algorithm is needed). That is, there is a range of $\Delta$ between $\log^{\omega(1)}\log n$ and $\log^{O(1)}n$ for which no $\log^{O(1)}\log n$-round algorithm exists for edge coloring using even $2\Delta-2$ colors. It is this range of $\Delta$ for which the distributed LLL is most difficult, and on which this work is focused.

Regarding deterministic algorithms for edge coloring with fewer than $2\Delta-1$ colors, Ghaffari et al. \cite{GKMU18} gave $poly(\log n)$-round deterministic algorithms for $3\Delta/2$ edge coloring, and for $\Delta+o(\Delta)$ edge coloring when $\Delta = \tilde\Omega(\log n)$. As a consequence of \cite{GHK18,RG20}, the algorithm of Chang et al. \cite{CHLPU19} can be derandomized in \local\ at a polylogarithmic overhead, and therefore gives deterministic $(1+\eps)\Delta$-edge coloring (for $\eps =\omega(\frac{\log^{2.5}\Delta}{\sqrt{\Delta}})$) in $poly(\log n)$ rounds (using the LLL algorithm of \cite{CPS17} or \cite{MT10}).

On the lower bound side, Chang et al. \cite{CHLPU19} also show an $\Omega(\log_\Delta \log n)$-round randomized lower bound and an $\Omega(\log_\Delta n)$-round deterministic lower bound for $2\Delta-2$ edge coloring, while for $2\Delta-1$ edge coloring only a $\Omega(\log^* n)$-round lower bound is known \cite{Linial92,Naor91}, further demonstrating the sharp increase in the difficulty of the problem when using fewer than $2\Delta-1$ colors. For an excellent tabular overview on the prior work for both distributed LLL and edge coloring, see Chang et al. \cite{CHLPU19} (though note that the subsequent polylogarithmic network decomposition result \cite{RG20} improved some of the stated bounds).

\section{Our Results and Approach}
In this section we outline our main results, and the ideas needed to attain them.

\subsection{Results}
Our main result is an improved randomized algorithm for the distributed Lov\'asz Local Lemma. In a slightly simplified form, the result is the following:

\begin{theorem}[Simplified version of Theorem \ref{thm:LLL}]\label{thm:LLLsimp}
	There is some constant $c$ such for $1\le r \le \frac{d}{\log d}$, LLL with criterion $p\le 2^{\frac{-cd}{r}}$ can be solved in $O(r+\log^{O(1)}\log n)$ rounds of \rl, succeeding with high probability in $n$.
\end{theorem}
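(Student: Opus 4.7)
The plan is to follow the standard two-phase framework for distributed LLL algorithms: a randomized \emph{pre-shattering} phase which, with high probability, leaves only a residual instance whose dependency graph breaks into components of size $\mathrm{poly}(\log n)$, followed by a deterministic \emph{post-shattering} phase that solves each component in $\log^{O(1)}\log n$ rounds. The post-shattering phase is off-the-shelf: invoke the Rozhoň--Ghaffari \cite{RG20} network decomposition on each residual component (whose size is $\mathrm{poly}(\log n)$, so running time in $n$ becomes $\log^{O(1)}\log n$) and run a deterministic LLL solver on each piece, as in Fischer--Ghaffari \cite{FG17}. The whole creative effort therefore goes into constructing an $O(r)$-round pre-shattering procedure compatible with the criterion $p\le 2^{-cd/r}$.

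For the pre-shattering phase I would design an $r$-round randomized process in which, at each round, every still-bad event attempts to (partially) resample its variables, with conflicts among simultaneously-acting neighbors resolved by a short coordination subroutine (e.g.\ an $O(1)$-round independent-set style filter, or a color-class schedule based on an $O(d)$-coloring of $G^{\ents}$ precomputed in the pre-shattering time budget). The point of using $r$ rounds rather than one is that each round only needs to drive the per-event failure probability down by a factor of about $2^{cd/r}$; then after all $r$ rounds the probability a particular event is still bad is at most $p\cdot 2^{cd} \le 2^{-c'd}$, which is far below the $d^{-\Theta(1)}$ threshold needed to trigger the standard shattering lemma. The crucial feature of the criterion $p\le 2^{-cd/r}$ is that it supplies exactly $2^{cd/r}$ units of ``slack per round'', matching what the $r$-round scheme consumes; this is what produces the announced trade-off.

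The main obstacle, as usual with such schemes, is the dependence analysis. Although a single round's resampling behaves as if each event fails independently with probability $\le 2^{-cd/r}$, after several rounds the behaviour of two neighboring events becomes correlated through the shared variables they have been touching, and a naive union bound across the $r$ rounds only gives slack $r\cdot(d/r)=d$ in the exponent split across $r$ events, which is not enough. The right tool is a Moser--Tardos-style \emph{witness tree / execution log} argument: every event which survives all $r$ rounds has an associated length-$r$ witness rooted at it, the number of such witnesses is bounded by something like $(ed)^{r}$, and each specific witness has probability at most $p^r\le 2^{-cd}$ of occurring. Choosing the constant $c$ so that $(ed)^r\cdot 2^{-cd}\le d^{-10}$ (feasible precisely in the regime $r\le d/\log d$) yields a per-event failure probability of $d^{-\Omega(1)}$, and the independence structure of the witnesses (they depend on disjoint randomness when the events are far apart) gives the exponential decay needed for a Galton--Watson-type shattering bound.

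Combining everything gives a pre-shattering of cost $O(r)$ rounds and a post-shattering of cost $\log^{O(1)}\log n$ rounds, so the overall complexity is $O(r+\log^{O(1)}\log n)$, succeeding w.h.p.\ in $n$, as claimed. The cleanest statement to aim at in the formal proof is an intermediate lemma saying: after the $r$-round pre-shattering, the set of still-bad events has, with probability $1-n^{-\Omega(1)}$, no connected component of size larger than $\log^{O(1)}n$ in $G^{\ents}$; this lemma is exactly what the witness-tree bound above produces and what the network-decomposition machinery consumes.
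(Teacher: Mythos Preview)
Your two-phase framework (pre-shattering in $O(r)$ rounds, post-shattering in $\log^{O(1)}\log n$ rounds via network decomposition) matches the paper's, but the pre-shattering mechanism you sketch is both different from the paper's and has a real gap.

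The gap is in the coordination of a single ``resampling round''. A Moser--Tardos witness-tree argument of the kind you invoke requires that in every round, each still-bad event either resamples or has a neighbor that resamples (maximality, or at least a weak form of it); otherwise an event can simply sit idle for $r$ rounds with no length-$r$ witness attached to it. But computing a maximal (or weak-maximal) independent set of bad events is not an $O(1)$-round operation: the known bounds are $O(\log^2 d)$ or $O(\log d)$ per MT iteration, which would blow the budget up to $O(r\log d)$ or worse. Your two suggested fixes do not close this: an ``$O(1)$-round IS filter'' (e.g.\ one Luby step) gives no maximality guarantee, so the witness-tree/path bound $(ed)^r p^r$ is not available; and iterating an $O(d)$-coloring costs $O(d)$ rounds per MT iteration, hence $O(rd)$ total. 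Without a concrete $O(1)$-round primitive that still supports the length-$r$ witness bound, the $O(r)$ pre-shattering claim does not go through.

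The paper takes a different route, in the Molloy--Reed / Fischer--Ghaffari style rather than Moser--Tardos. Variables are sampled \emph{once}, not repeatedly. The $r$ rounds arise from iterating through an $r$-part \emph{light partition} of the events (each event has only $O(d/r)$ neighbors in any part), and in round $i$ the events in part $i$ sample their allocated variables; if this makes a neighbor ``dangerous'', the sampler reverts all its variables and nearby events are deferred. The criterion $p\le 2^{-cd/r}$ enters not via a witness count but via a union bound over the at most $2^{O(d/r)}$ possible reverted subsets within a single part, which is exactly enough slack to establish the paper's ``resilience'' property and hence the shattering bound. The light partition itself is obtained by a bootstrap call to the same algorithm with the trivial $1$-partition. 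So the place where the exponent $cd/r$ is spent is structurally different from what you propose: it pays for subset-of-a-part union bounds, not for witness-tree enumeration.
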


This result is a trade-off between LLL criterion strength and round complexity, and improves round complexities over a wide range of criteria. For the commonly-studied regime of polynomially-weakened LLL criteria, this improves the round complexity from the $O(d^2 + \log^{O(1)}\log n)$ of Fischer and Ghaffari \cite{FG17} to $O(\frac{d}{\log d}+ \log^{O(1)}\log n)$. At the other end of the spectrum, it provides a $\log^{O(1)}\log n$-round algorithm for LLL with the criterion $p\le \min \{d^{-c}, 2^{\frac{-d}{\log^{O(1)}\log n}}\}$ (for some constant $c$), improving substantially over prior results and coming within a polynomial factor of the $\Omega(\log\log n)$-round lower bound of Brandt. et al. \cite{BF+16}.

As our main application, we use our LLL algorithm to improve the complexity of edge coloring using fewer than $2\Delta-1$ colors:

\begin{theorem}\label{thm:edgecolor}
	Let $\eps = \omega(\frac{\log^{2.5}\Delta}{\sqrt{\Delta}})$ be a function of $\Delta$. There is an algorithm for $(1+\eps)\Delta$-edge coloring taking $poly(1/\eps, \log\log n)$ rounds of \rl, succeeding with high probability in $n$.
\end{theorem}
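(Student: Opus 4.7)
The plan is to instantiate the $(1+\eps)\Delta$-edge coloring framework of Chang et al.~\cite{CHLPU19} using our new LLL algorithm (Theorem~\ref{thm:LLLsimp}) as the LLL subroutine. Their framework yields an algorithm of round complexity $O(\log(1/\eps)\cdot T_{LLL}(n,d,p) + \log^{O(1)}\log n)$, where the LLL instances appearing in the reduction have dependency-degree $d = \Delta^{O(1)}$ and per-event probability bound $p = \exp(-\eps^2\Delta/\log^{4+o(1)}\Delta)$. Since the framework itself is already proven in~\cite{CHLPU19}, the task reduces to bounding $T_{LLL}$ using Theorem~\ref{thm:LLLsimp} and confirming that the resulting substitution gives the claimed $poly(1/\eps, \log\log n)$ complexity.

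To apply Theorem~\ref{thm:LLLsimp}, I would set the trade-off parameter
\[
r = \left\lceil \frac{cd\,\log^{4+o(1)}\Delta}{\eps^2\,\Delta} \right\rceil,
\]
i.e.\ the smallest value for which the required criterion $p\le 2^{-cd/r}$ holds. Before invoking the theorem, I must confirm that $1\le r\le d/\log d$; after cancellation this upper bound simplifies to $\eps^2\Delta\ge\log^{5+o(1)}\Delta$, which matches exactly the hypothesis $\eps = \omega(\log^{2.5}\Delta/\sqrt{\Delta})$ on $\eps$. Theorem~\ref{thm:LLLsimp} then supplies $T_{LLL}(n,d,p) = O(r + \log^{O(1)}\log n)$.

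Substituting back into the complexity expression of~\cite{CHLPU19} yields an overall bound of $O(\log(1/\eps)(r + \log^{O(1)}\log n) + \log^{O(1)}\log n)$. The final step is to verify that this expression is bounded by a polynomial in $1/\eps$ and $\log\log n$. This requires using the hypothesis on $\eps$, together with the specific polynomial degree of $d$ in $\Delta$ produced by Chang et al.'s reduction, to absorb the $\Delta$-factors appearing in $r$ into terms that are polynomial in $1/\eps$ and $\log\log n$.

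The main obstacle will be this last verification. Since $r$ contains a factor growing polynomially in $\Delta$, we must exploit the hypothesis $\eps = \omega(\log^{2.5}\Delta/\sqrt{\Delta})$, which relates $\eps$ to $\Delta$ and so allows us to trade the $\Delta$-dependence for a $1/\eps$-dependence in the final bound. This is precisely the regime in which the improved LLL criterion afforded by Theorem~\ref{thm:LLLsimp} begins to pay off over the earlier LLL algorithms of~\cite{FG17} and~\cite{CPS17,MT10}, whose substitution into the same framework would instead yield the weaker bounds $poly(\Delta,\log\log n)$ and $poly(\log n)$, respectively.
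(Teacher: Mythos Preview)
Your plan has a genuine gap at exactly the step you flag as the main obstacle. The hypothesis $\eps = \omega(\log^{2.5}\Delta/\sqrt{\Delta})$ gives only a \emph{lower} bound on $\Delta$ in terms of $1/\eps$ (namely $\Delta = \omega(\eps^{-2}\log^5\Delta)$); it places no upper bound on $\Delta$, and so cannot be used to ``trade the $\Delta$-dependence for a $1/\eps$-dependence'' as you propose. Concretely, take $\eps$ constant and $\Delta \approx \log n$: the hypothesis holds easily, yet your $r = \Theta\big(d\log^{4+o(1)}\Delta/(\eps^2\Delta)\big)$ contains a factor $d/\Delta$, and since \cite{CHLPU19} only guarantees $d = \Delta^{O(1)}$, this factor may itself be a positive power of $\Delta\approx\log n$. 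In this regime $p = \exp\big(-\Theta(\log n/(\log\log n)^{4+o(1)})\big)$ is \emph{not} small enough for a union bound over $n$ events, so an LLL call is genuinely required, and your round count becomes $\log^{\Omega(1)} n$ rather than $poly(1/\eps,\log\log n)$. Even switching to Theorem~\ref{thm:LLL} with $\ad$ in place of $d$ would require both verifying that the instances in \cite{CHLPU19} admit an allocation with $\ad = O(\Delta)$ and a separate case analysis to control $\log\Delta$.

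The paper therefore takes a different route. After disposing of the easy regimes (when $1/\eps = \Delta^{\Omega(1)}$, \cite{FG17} already gives $poly(1/\eps,\log\log n)$; when $\Delta \ge \log^2 n$ and $1/\eps = \Delta^{o(1)}$, $p$ is small enough that no LLL is needed), the hard case $\Delta < \log^2 n$, $1/\eps = \Delta^{o(1)}$ is handled by a \emph{degree-reduction} step. Using the defective edge coloring of Lemma~\ref{lem:defe2} (itself an application of Theorem~\ref{thm:LLL}), the edges are split into buckets each inducing a subgraph of maximum degree $\Delta' = \Theta(\eps^{-4}\log^4(1/\eps)) = poly(1/\eps)$, with a proportional share of the palette. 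Each bucket is then an independent $(1+\eps')\Delta'$-edge-coloring instance, and applying \cite{CHLPU19} (with \cite{FG17} or Theorem~\ref{thm:LLL} as the LLL subroutine) on these takes only $poly(\Delta',\log\log n)=poly(1/\eps,\log\log n)$ rounds. Thus the new LLL result enters through the degree reduction, not by direct substitution into \cite{CHLPU19}.
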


The exact landscape of round complexities for $(1+\eps)\Delta$-edge coloring is complex, with different results taking precedence at different regimes of $n$, $\Delta$, and $\eps$. The most important regime in which Theorem \ref{thm:edgecolor} improves over previous results is when whenever $\Delta$ is between $\log^{\omega(1)}\log n$ and $\log^{1-\Omega(1)} n$, which was previously the hardest case. This improvement implies an efficient randomized algorithm for $\Delta+o(\Delta)$-edge coloring across the whole range of $\Delta$ (as a function of $n$):

\begin{corollary}\label{cor:edge}
	$\Delta+o(\Delta)$-edge coloring can be performed in $\log^{O(1)}\log n$ rounds of \rl, succeeding with high probability in $n$.
\end{corollary}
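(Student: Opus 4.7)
The plan is to derive the corollary by applying Theorem~\ref{thm:edgecolor} with a choice of $\eps$ tuned so that the round complexity collapses to $\log^{O(1)}\log n$ while the excess color count $\eps\Delta$ remains $o(\Delta)$. The key observation is that Theorem~\ref{thm:edgecolor} charges $\text{poly}(1/\eps,\log\log n)$ rounds, so provided $1/\eps$ is at most some polylogarithm in $\log n$, the entire cost is absorbed into $\log^{O(1)}\log n$.

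Concretely, I would set
\[
\eps := \max\!\left(\frac{\log^{3}\Delta}{\sqrt{\Delta}},\ \frac{1}{(\log\log n)^{c}}\right)
\]
for a sufficiently large constant $c$. The first term in the maximum ensures $\eps = \omega(\log^{2.5}\Delta/\sqrt{\Delta})$ (the extra $\log^{1/2}\Delta$ factor upgrading $\Theta$ to $\omega$), so the hypothesis of Theorem~\ref{thm:edgecolor} is satisfied. The second term yields the uniform bound $1/\eps \le (\log\log n)^{c}$, which makes the round complexity $\text{poly}(1/\eps,\log\log n)$ equal to $\log^{O(1)}\log n$.

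For the color count, $(1+\eps)\Delta = \Delta + \eps\Delta$, and $\eps\Delta$ is the maximum of $\sqrt{\Delta}\log^{3}\Delta$ and $\Delta/(\log\log n)^{c}$. Dividing each by $\Delta$ yields $\log^{3}\Delta/\sqrt{\Delta}$ and $1/(\log\log n)^{c}$, both of which vanish (the former as $\Delta \to \infty$, the latter as $n \to \infty$). Hence $\eps\Delta = o(\Delta)$ and the coloring uses $\Delta + o(\Delta)$ colors.

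The argument is essentially a parameter-tuning of Theorem~\ref{thm:edgecolor}, so there is no substantive obstacle; the only real care needed is balancing the two components of $\eps$, since the first must dominate the LLL-criterion threshold $\log^{2.5}\Delta/\sqrt{\Delta}$ while the second must keep $1/\eps$ inside the $\log^{O(1)}\log n$ budget. For the asymptotic interpretation to be meaningful at very small (bounded) $\Delta$, the symbol $o(\Delta)$ is understood in the standard sense of a function of $\Delta$ that is $o(\Delta)$ as $\Delta \to \infty$.
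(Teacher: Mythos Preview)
Your proposal is correct and takes essentially the same approach as the paper: tune $\eps$ so that it satisfies the $\omega(\log^{2.5}\Delta/\sqrt{\Delta})$ hypothesis of Theorem~\ref{thm:edgecolor} while keeping $1/\eps$ bounded by $\log^{O(1)}\log n$, which collapses the $poly(1/\eps,\log\log n)$ round complexity to $\log^{O(1)}\log n$ and leaves $\eps\Delta = o(\Delta)$.

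The one minor difference is that the paper performs a case split on whether $\Delta \le \log^8 n$. For small $\Delta$ it does exactly what you do (apply Theorem~\ref{thm:edgecolor} with $\eps$ both $\omega(\log^{2.5}\Delta/\sqrt{\Delta})$ and $\log^{-O(1)}\log n$); for large $\Delta$ it bypasses Theorem~\ref{thm:edgecolor} and invokes \cite{CHLPU19} directly in the regime where no LLL call is needed, obtaining the stronger $\eps = \log^{-O(1)} n$. This case split is not required for the corollary as stated---since Theorem~\ref{thm:edgecolor} already covers all $\Delta$, your uniform application of it suffices---but the paper includes it to record the sharper color count available when $\Delta$ is large. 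Your streamlined argument is therefore a valid and slightly cleaner route to the corollary itself.
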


Previously no such result was known for $\Delta$ between $\log^{\omega(1)}\log n$ and $\log^{1-\Omega(1)} n$. Indeed, for $\Delta = \log^\delta n$, with constant $\delta \in (0,1)$, no prior $\log^{o(1)}n$-round algorithm was known even for $2\Delta-2$-edge coloring.
\subsection{Approach}
We next discuss the new techniques our algorithms employ to attain the improved round complexities.

\paragraph{LLL Algorithm for Resilient Instances}
Let us first consider the LLL algorithm of Fischer and Ghaffari \cite{FG17}, which consists of an $O(d^2+\log^* n)$-round randomized procedure to shatter the graph into small pieces, followed by a $\log^{O(1)}\log n$-round deterministic post-shattering procedure. It proceeds by first computing an $O(d^2)$-vertex coloring of the \emph{square} of the LLL graph; that is, it assigns colors to the vertices of $G^\ents$ such that no vertices within distance $2$ share the same color. Then, the color classes are iterated through. Nodes in the active color class sample their unsampled dependent variables one-by-one (recall that nodes in a distributed LLL instance correspond to bad events, which are dependent on some subset of the set \Vars\ of underlying variables). If sampling a variable causes one of its dependent events to become \emph{dangerous} (informally, too likely to occur), then this is detected by the sampling node, which \emph{reverts} that variable. All the remaining unsampled dependent variables of a dangerous event are \emph{frozen}, meaning they will not be sampled during the randomized procedure and are instead left to the deterministic procedure.\footnote{The version of Fischer and Ghaffari's algorithm described here is the preprint version (\url{https://arxiv.org/abs/1705.04840}), since this version is closest to our own algorithm and provides the best intuition for our changes. However, it contains a minor error in the analysis of events with reverted variables, inherited from \cite{MR98} (see discussion in \cite{PT09}). As a result, the corrected published version of \cite{FG17} no longer reverts variables. We remark, though, that a correct version that still reverts variables is possible, for example by freezing the dependent variables of any event that has a variable reverted, rather than just the dangerous event that caused the variable to be reverted.}

The distance-$2$ coloring ensures that, at any point, each event only has at most one of its dependent variables sampled at a time (since at most one of its neighboring nodes can be active). This allows the analysis to constrain how much damage any particular variable sampling can do, and ensure that the instance always remains satisfiable using the already-sampled values (minus the ones that were reverted). After the randomized procedure, it is shown that most nodes have all of their variables sampled, and the remaining graph (induced by nodes who still have unsampled variables) shatters into small pieces of size $poly(d)\log n$. Then, a deterministic algorithm can be used to fix values for these remaining variables in $\log^{O(1)}\log n$ rounds. We detail this shattering and deterministic process in Section \ref{sec:shatlll}.

We wish to improve over the $O(d^2)$ term, which means we cannot afford to iterate through a distance-$2$ coloring. Instead, we assume we are equipped with a partition of the vertices into fewer than $d^2$ parts (in fact, we will always use at most $O(\frac{d}{\log d})$), and iterate through the parts of that instead. We also assume we have an allocation of variables to one of their dependent events - this event will be responsible for sampling that variable (we can always use an arbitrary allocation, but in some cases one with better properties may be clear). 

The problem with iterating through this partition rather than a distance-$2$ coloring is that nodes no longer have only one of their dependent variables sampled at a time - in fact, all of their neighbors in one part of the partition will sample all of their allocated variables at once. This can cause bad events to rapidly become too likely, especially when variables are reverted. One of the changes we make to combat this is that, rather than reverting individual variables, active events must revert \emph{all} of their allocated variables if one causes a neighboring event to become dangerous. This limits the amount of possible reversion combinations we must consider.

To analyze this, we adapt the definition of an event becoming dangerous, and introduce a property that we call resilience to quantify whether an LLL instance can withstand multiple events sampling and reverting their allocated variables simultaneously. This concept of resilience can be seen as an extension of that of fragility in \cite{GHK18}. The resilience of an LLL instance depends on the event partition chosen. Therefore, to show a result for the general LLL, we must first show how to find good partitions.

We note that the result for resilient instances is stronger than that for the general LLL, and may be of independent interest since in some applications it may give better results when used directly.

\paragraph{General LLL Algorithm}
The property we want from a good partition is simple: we just require that any node has few neighbors in each part of the partition, since again this will help limit the number of possible reversion combinations. We call such a partition a `light partition'. In an interesting `bootstrapping' fashion, we can find such a light partition by framing the problem as a resilient LLL instance, and solving it with our algorithm for resilient LLL.

Our algorithm for general LLL is then simply the algorithm for resilient instances, equipped with a light partition. The difficulty is in proving that general instances are resilient using such a partition (and the parameters of the partition depend on the criterion of the general LLL instance). We do so by essentially taking a union bound over all possible variable reversion combinations, to show that with sufficiently high probability an event would be able to withstand any of them.

\paragraph{Defective Colorings and $\Delta+o(\Delta)$-Edge Coloring}
The first applications we show of our LLL algorithm are for defective colorings. Defective vertex coloring is a classic application of the LLL: it is a relaxation of proper coloring in which nodes are merely required to have few (rather than no) neighbors of their own color. We get improved results for both this and an edge coloring variant, the latter of which is crucial to our main application of (proper) $\Delta+o(\Delta)$-edge coloring.

Specifically, by first employing a defective edge coloring to divide edges into buckets (and also evenly dividing the colors in the palette among these buckets), we can reduce a $(1+\eps)\Delta$-edge coloring instance (where $\eps = \omega(\frac{\log^{2.5}\Delta}{\sqrt{\Delta}})$; this condition comes from \cite{CHLPU19}) into a collection of edge-disjoint instances that can be solved in parallel (since their palettes are also disjoint). These instances have $poly(1/\eps)$ maximum degree, and we use the properties of our defective edge coloring to show that they still have sufficient palette size to be solvable. Then, the main result follows by applying the existing edge coloring algorithm of \cite{CHLPU19}, equipped with our general LLL algorithm (or that of \cite{FG17}), to these smaller instances.

\subsection{Concurrent Work}
Concurrently with our work, Halld{\'o}rsson, Maus and Nolin \cite{HMN22} present results on distributed vertex splitting problems, which are related to the distributed LLL. While their work focuses on the \textsf{CONGEST} model and does not give results for the LLL, some techniques and applications are similar to those here. In particular, they also give a $\log^{O(1)}\log n$-round distributed algorithm for $(1+\eps)\Delta$-edge coloring (though only for constant $\eps > 0$, rather than for $\Delta+o(\Delta)$-edge coloring).

\subsection{Paper Structure}
The structure of the paper is as follows:
\begin{itemize}
	\item In Section \ref{sec:shatlll}, we discuss the shattering framework that is used as the second part of our (and Fischer and Ghaffari's \cite{FG17}) LLL algorithm.
	\item In Section \ref{sec:resLLL}, we define the concept of resilience, and present and analyze our LLL algorithm for resilient instances.
	\item  In Section \ref{sec:partitions}, we define the light partitions that we wish to employ in the general LLL algorithm, and show how to compute them using the LLL algorithm for resilient instances.
	\item  In Section \ref{sec:genLLL} we show how our algorithm, equipped with these light partitions, solves general LLL instances.
	\item  In Section \ref{sec:def} we apply the general LLL result to find improved defective colorings and edge colorings.
	\item In Section \ref{sec:edgecolor} we use these defective edge colorings to solve (proper) $\Delta+o(\Delta)$-edge coloring.
\end{itemize}

\section{Lov\'asz Local Lemma on Shattered Graphs}\label{sec:shatlll}
In this section, we outline the shattering framework that is used for sublogarithmic distributed LLL algorithms. The idea is that, if we can employ a fast randomized process to fix \emph{some} of the variable values, in such a way that the LLL instance remains solvable and the residual LLL graph is shattered into small components, then these remaining small components can be finished off using a deterministic algorithm. This approach is, by now, well understood and utilized in many \LOCAL\ algorithms; the following lemma is implied by a combination of recent results \cite{FG17,GHK18,RG20}, but we formally state it here and sketch a proof for completeness.

\begin{lemma}\label{lem:shatlll}
Consider an LLL instance on $n$ vertices (bad events) with maximum degree $d$. Suppose we have performed some random process which fixes the value of some variables, such that:

\begin{itemize}
	\item the probability that a vertex $v$ does not have all its dependent variables fixed is at most $(ed)^{-4c}$, for some constant $c\ge 1$, and this bound holds even for adversarial choices of the random bits outside the $c$-hop neighborhood of $v$;
	\item conditioned on the values of fixed variables, the probability (over sampling remaining variables from their distributions) that any bad event $v$ is satisfied is at most $\frac{1}{ed^{2.1}}$.	
\end{itemize} 

Then, with high probability (over the randomness of this initial random process), the remaining variables can be fixed in $\log^{O(1)}\log n$ rounds by a deterministic algorithm in such a way that no bad event is satisfied, solving the LLL instance, 
\end{lemma}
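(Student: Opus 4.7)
The plan is to follow the standard two-stage shattering template: first argue that the ``bad'' vertices (those with unfixed variables) form small connected components after the random process, and second argue that a deterministic algorithm can efficiently solve the residual LLL instance on each component.

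First I would set up the shattering argument. Let $B$ be the random set of vertices that retain at least one unfixed variable after the random process. The first hypothesis says $\Pr[v\in B]\le (ed)^{-4c}$ even under adversarial choice of random bits outside the $c$-hop neighborhood of $v$. This is precisely the $c$-local dependency condition needed to invoke the classical Beck/MT-style shattering lemma (in the form used in Fischer--Ghaffari \cite{FG17} and Ghaffari--Harris--Kuhn \cite{GHK18}): if one looks at the graph $G^{\ents}$ raised to a suitable constant power, any set of ``would-be bad'' vertices whose pairwise distance in $G^{\ents}$ exceeds $2c$ is a collection of independently bad events, and a standard union bound over such independent sets shows that, with high probability in $n$, every connected component of $B$ (measured in, say, $(G^{\ents})^{O(c)}$) has size at most $O(d^{O(1)}\log n)$.

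Next I would check solvability of the residual instance. Once the fixed variables are hard-coded, each bad event becomes an event on the remaining unfixed variables, and by the second hypothesis its conditional probability is at most $p'=1/(ed^{2.1})$. The dependency degree in the residual instance is still at most $d$, so $ep'd\le d^{-1.1}<1$, i.e.\ the residual instance satisfies a polynomially-weakened LLL criterion on each shattered component. In particular it is solvable, and the slack $p'\le d^{-\Omega(1)}$ is exactly the regime handled by the deterministic post-shattering machinery.

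Finally, I would invoke the deterministic LLL algorithm on the small components. Using the polylogarithmic network decomposition of Rozho\v{n}--Ghaffari \cite{RG20} in combination with the deterministic LLL solver of \cite{GHK18}, one can solve a polynomially-weakened LLL instance on $N$ vertices in $\mathrm{poly}(\log N)$ rounds. Each component here has size $N=d^{O(1)}\log n$, so the per-component running time is $\mathrm{poly}(\log(d^{O(1)}\log n))=\log^{O(1)}\log n$ (the regime of interest is $d\le \mathrm{poly}(\log n)$, since otherwise the target bound is vacuous against the main LLL theorem). The components are vertex-disjoint in $(G^{\ents})^{O(c)}$, so they can be handled in parallel without interference, giving the stated $\log^{O(1)}\log n$ bound overall.

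The main obstacle, and the only place requiring real care, is the shattering step: one has to verify that the adversarial-conditioning clause in the hypothesis really does imply the independence needed to apply the Beck-style union bound over subsets of $B$ at pairwise distance $>2c$. The rest is essentially a bookkeeping combination of existing results.
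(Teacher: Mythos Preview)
Your approach is essentially the same as the paper's: shatter via the \cite{GHK18} lemma, note the residual instance satisfies a polynomially-slack criterion, then solve deterministically using \cite{RG20} plus the \cite{GHK18} derandomization framework. The structure is correct.

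However, there is a real gap in your treatment of the $d$-dependence. You claim $\mathrm{poly}(\log N)=\log^{O(1)}\log n$ for $N=d^{O(1)}\log n$ by asserting that ``the regime of interest is $d\le\mathrm{poly}(\log n)$, since otherwise the target bound is vacuous against the main LLL theorem.'' This is not justified: the lemma is stated for arbitrary $d$, and the main theorem's bound $O(r+\log^{O(1)}\log n)$ is \emph{not} vacuous for large $d$ (e.g.\ with $r=O(1)$ and a strong exponential criterion). When $d$ is not $\mathrm{poly}(\log n)$, $\log N = \Theta(\log d + \log\log n)$ can be much larger than $\log\log n$, and your $\mathrm{poly}(\log N)$ bound fails to give $\log^{O(1)}\log n$.

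The paper handles this exactly where you handwave: before computing the network decomposition, it first contracts nodes into clusters (as in \cite{FG17}) to reduce the effective number of nodes per component from $O(d^{2c}\log n)$ down to $O(\log n)$. Only then does \cite{RG20} yield an $(O(\log\log n),O(\log\log n))$-decomposition, and the derandomized \cite{CPS17} algorithm (whose base randomized complexity is $O(\log_d N)=O(\log\log n)$, absorbing the $d$-dependence) finishes in $\log^{O(1)}\log n$ rounds. You should incorporate this clustering step rather than restricting $d$.
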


\begin{proofs}
By Theorem V.1 of \cite{GHK18}, the induced graph on \emph{residual} vertices (those which do not have all variables fixed) \emph{shatters} into connected components of size at most $O(d^{2c}\log n)$ w.h.p. (and we denote by $N$ such an upper bound on this size). Setting $p=\frac{1}{ed^{2.1}}$, we now have a new LLL instance satisfying $epd^{2.1}\le 1$ on the residual graph. The randomized LLL algorithm of \cite{CPS17} would solve this instance in $O(\log_d N) = O(\log\log n)$ rounds (with each connected component succeeding with high probability in $N$). Furthermore, we can obtain an $(O(\log \log n),O(\log \log n))$-network decomposition of the residual graph in $\log^{O(1)}\log n$ rounds by \cite{RG20} or \cite{GGR21} (this is not entirely straightforward - we must first contract nodes into clusters to reduce the size of connected components to $O(\log n)$, as in \cite{FG17}).

The derandomization framework of \cite{GHK18}, applied to the LLL algorithm of \cite{CPS17} and armed with the $(O(\log n),O(\log n))$-network decomposition, then gives a \emph{deterministic} algorithm to fix the remaining variables, running in $\log^{O(1)}\log n$ rounds.
	
\end{proofs}

For our subsequent LLL algorithms, the challenge is now to provide a fast randomized process meeting the criteria of Lemma \ref{lem:shatlll}; from there, the lemma can be applied to reach the final output.

\section{Lov\'asz Local Lemma on Resilient Instances}\label{sec:resLLL}
In this section, we will present our LLL algorithm (Algorithm \ref{alg:LLL}), and analyze it on instances with a property we will call resilience. This property requires instances to be equipped with a partition $\Part$ of the event set $\ents$ as input. Later on, in Section \ref{sec:genLLL}, we will show how to find partitions for which we can show resilience for the standard general LLL problem, thereby 
turning Algorithm \ref{alg:LLL} into a general LLL algorithm. In some applications, though, stronger results may be obtainable by using the version for resilient instances, and proving resilience for the application directly.

\subsection{Variable Allocation and the Allocated LLL Graph}

To reach our definition of resilience, we must first define concept of an allocation of variables to events. This is a mapping of events to subsets of their dependent variables, in such a way that we partition the entire space of variables.

\begin{definition}
	An allocation of variables $\vars$ is a function $\ents \rightarrow 2^{\Vars}$ such that for all $A \in \ents$, $\vars(A) \subseteq \Vars(A)$, and for all $v \in \Vars$, there is exactly one $A \in \ents$ with $v \in \vars(A)$.
\end{definition}

We phrase an allocation as a function from events to subsets of variables since this will provide cleaner notation in calculations, but it is easier to think of as a mapping of variables to one of their dependent events. The purpose of the allocation is to fix which node (corresponding to an event) is responsible for sampling each variable: our algorithm will only ever allow variables to be sampled by their allocated event.

In the analysis of our subsequent LLL algorithm, we will sometimes consider a restricted version of the $LLL$ graph, which we call the \emph{allocated LLL graph} $\aG$. In this graph, we place an edge between events $A$ to $B$ iff $B$ depends on one of $A$'s \emph{allocated} variables, or vice versa. That is, the edge $\{A,B\}$ is in $E(\aG)$ iff $(\vars(A) \cap \Vars(B))\cup(\vars(B) \cap \Vars(A))  \ne \emptyset$. We will denote by $\ad$ the maximum degree in the graph. Note that that $\ad \le d$, since the edges in $\aG$ are a subset of those in $G^{\ents}$. Furthermore, any pair of adjacent events in $G^{ents}$ are at distance at most $2$ in $\aG$, since they are both adjacent to some event with an allocated variable in $\Vars(A)\cap\Vars(B)$. Therefore, $\ad\le d \le \ad^2+\ad<2\ad^2$.

In the worst case, we can always use an arbitrary variable allocation (allocating variables to any event which is dependent on them), and then $\ad$ can be as large as $d$. However, many applications of the distributed LLL have natural allocations that provide better properties. For example, in vertex coloring problems, both events and variables are associated with particular vertices of the input graph, and so we can allocate each variable to the event for the corresponding input vertex. It therefore transpires to give us stronger bounds for the eventual general LLL result and applications to use $\ad$ rather than $d$.

\subsection{Description of LLL Algorithm}

Our algorithm will work as follows: we proceed in $r$ rounds. In round $i$, the nodes in part $\Part_i$ of the partition will sample their allocated variables. This may make some events \emph{dangerous} (too likely to be satisfied; the formal definition is given in the algorithm). Any node in $\Part_i$ (i.e. that has just sampled its allocated variable values) which is within 1 hop (in $G^{\ents}$) of a newly \emph{dangerous} node becomes \emph{reverted} (joins set $R_{i+1}$). This means that is `undoes' its choice of allocated variable values, leaving those variables to be chosen later by the algorithm for shattered graphs (Lemma \ref{lem:shatlll}). The other nodes in $\Part_i$ are \emph{fixed} (joining set $F_{i+1}$) - their allocated variables will use the sampled values in the ultimate solution. We will refer to the allocated variables of fixed and reverted events as fixed and reverted variables respectively.

Reverting variables in this way can increase the probabilities of events which depend upon these variables. The formal definition of being \emph{dangerous} is designed to ensure that we will still be able to avoid satisfying these events with reverted dependent variables. However, we cannot risk sampling any more of their dependent variables at this stage. So, any node within $2$ hops (in $G^{\ents}$) of a reverted event which has not already sampled its allocated variables (i.e. is in some part $\Part_j$ with $j>i$) becomes \emph{deferred} (joining set $D_{i+1}$). This means that it will \emph{not} sample its allocated variables in round $j$, and these variables will instead be chosen later by the algorithm for shattered graphs (Lemma \ref{lem:shatlll}). We call the allocated variables of deferred events deferred variables. By deferring events within $2$ hops of reverted events, we ensure that events dependent on reverted variables have no further dependent variables sampled in this stage of the algorithm (i.e. until the algorithm for shattered graphs is applied later).

For the purposes of our analysis, we will use the perspective of a \emph{randomness table}: imagine a table with a column for each variable, and in which each row contains a value for that variable, sampled from its distribution. To begin with these all values are hidden; when our algorithm calls for a variable to be sampled, we reveal the next hidden entry in that variable's column. We will only need the first two rows of the table for the analysis of our algorithm. For a set $S$ of variables, we will denote their \emph{first} sampled values (the first row in the randomness table) by $S^1$, and the second by $S^2$. The first sampled values of the allocated variables of \emph{fixed} nodes will be used in the ultimate solution. The second sampled values are just considered for the sake of analysis - all other variables will actually have their values chosen by the algorithm of Lemma \ref{lem:shatlll}.

\subsection{Resilience}

To reason about the probabilities of events being satisfied during the analysis of our algorithm (and also to define the resilience property we use to quantify the hardness of LLL instances), we will need the following notation:

\begin{notation}
	For an event $A$ and a set of events $S\in \ents$, denote by $A_S$ the event that $A$ is satisfied when the allocated variables of events in $S$ (denoted $\vars(S)$) take their \emph{second} sampled value, and all others take their \emph{first}.
\end{notation}
Notice that for any $S$, $\Prob{A_S}=\Prob{A}$; however, specifying which sampled values are taken by variables affects the \emph{correlation} between events. In particular, our analysis will deal with the probabilities of such events conditioned on \emph{some} of the variables having their first values revealed already.

We then define resilience as follows:

\begin{definition}\label{def:resilient}
	Given an LLL instance, equipped with variable allocation \vars\ and event partition $\Part= \{\Part_1, \dots, \Part_r\}$, for each event $A\in\ents$ let $A'$ be the following event (determined by a given assignment to the random variables $\Vars^1$):
	\[A' := \bigcup_{\substack{i \le r,\\S\subseteq \Part_i}}\left\{\Pru{\Vars^2}{A_{S}}\ge d^{-\cc}\right\}\enspace.\]
	
	We call the instance $\Part$-resilient if for each event $A$, 
	\[\Pru{\Vars^1}{A'}  \le d^{-\cb}  \enspace.\]	
	
	We call an instance $r$-resilient, for $r\in \nat$, if there exists some event $r$-partition $\Part$ (i.e. with $|\Part|=r$) for which the instance is $\Part$-resilient.
\end{definition}

Intuitively, $A'$ is the event that, when some set of events $S$ within a single part of the partition revert their allocated variables, $A$ becomes too likely to occur. $A'$ is an event defined on the variable values $\Vars^1$; to evaluate whether $A'$ occurs we need to know these values (but the values $\Vars^2$ are still to be drawn from their distributions). $\Part$-resilience then says, roughly, that each event remains unlikely to occur even if some adversarially chosen subset of the events in any one particular part of $\Part$ revert their allocated variables.

As mentioned, this definition is related to the concept of `fragility' in \cite{GHK18}: in particular, fragility is essentially equivalent in power to $1$-resilience, i.e., when the partition $\Part = \{\Vars\}$, and so the entire variable set is considered at once. As we will show, our extension to use nontrivial partitions and $r>1$ will greatly increase the scope of applications.

\subsection{Statement and Analysis of Algorithm \ref{alg:LLL}}\label{sec:ResAnalysis}

Our algorithm for the Lov\'asz Local Lemma (currently equipped with a partition $\Part$ for which it is resilient, but we will later show how to choose such a partition for the general case) is as follows (Algorithm \ref{alg:LLL}):

\begin{algorithm}[H]
	\caption{LLL(\Part)}
	\label{alg:LLL}
	\begin{algorithmic}
		\State Initialize $\fin_1, R_1, D_1 \gets \emptyset$
		\For{$i = 1$ to $r$}
		\State Initialize $\fin_{i+1} \gets \fin_{i},R_{i+1} \gets R_{i}, D_{i+1} \gets D_{i}$
		\State Each $A\in \Part_i\setminus D_i$ samples allocated variables' first values $\vars(A)^1$ from their distributions
		\State Any $A\in \ents$ is \emph{dangerous} if $\Pru{(\Vars \setminus \vars( \fin_{i}\cup \Part_i\setminus D_i))^1}{A'\mid \vars( \fin_{i}\cup \Part_i\setminus D_i)^1}\ge d^{-\ca}$
		\State Any $A\in \Part_i\setminus D_i$ within one hop of a dangerous event becomes \emph{reverted}; $R_{i+1}\gets R_{i+1}\cup \{A\}$ 
		\State Otherwise, $A\in \Part_i\setminus D_i$ is \emph{fixed}: $\fin_{i+1}\gets \fin_{i+1} \cup \{A\}$
		\State All $A\in \Part_{j}$ for $j>i$ within $2$ hops of a reverted event are \emph{deferred}: $D_{i+1}\gets D_{i+1} \cup \{A\}$.
		
		\EndFor
		
		\State Variables $\vars(F_r)$ are fixed as their first sampled values $\vars(F_r)^1$
		\State Remaining variable values are chosen using algorithm for shattered graphs (Lemma \ref{lem:shatlll})
		
	\end{algorithmic}
\end{algorithm}

It is clear that upon completion of the first stage of the algorithm (by which we mean all lines except the final call to Lemma \ref{lem:shatlll}), all events are either fixed, reverted, or deferred. This first stage also clearly takes $O(r)$ rounds in \LOCAL. It therefore remains to show that this first stage meets the conditions of Lemma \ref{lem:shatlll}. Then, Lemma \ref{lem:shatlll} will fix the remaining variable values to reach a valid solution in $O(r+\log^{O(1)}\log n)$ total rounds.

We analyze the algorithm as follows: fix an event $A$ to consider. By our assumption of resilience, we have $\Pru{\Vars^1}{A'}  \le d^{-\cb}$ (where $A'$ is as in the definition of resilience). 

In the following proofs, we will use $N(A)$ to denote the inclusive neighborhood of event $A$ in $G^\ents$, that is:
\[N(A) := \{A\}\cup  \{B\in \ents: \{A,B\}\in E(G^\ents)\}\enspace.\]
Similarly, we denote by $N_\vars(A)$ the inclusive neighborhood of $A$ in $\aG$.

For each $i$, let $E_i$ be the event that $\mathbf{Pr}_{(\Vars\setminus \vars(\fin_i))^1}[A'\mid \vars(\fin_i)^1]\ge d^{-\ca}$, and let $E=\cup_{i\le r} E_i $. Intuitively, $E_i$ is the event that, conditioned on the fixed variable values up to round $i$, $A'$ becomes too likely. $E$ is then the event that $A'$ becomes too likely at any point during the algorithm. We wish to upper-bound the probability of $E$:

\begin{lemma}\label{lem:safe}
$\Prob{E}\le d^{-24.5}$. 
	
\end{lemma}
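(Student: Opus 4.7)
The plan is to interpret $M_i := \Pru{(\Vars\setminus\vars(\fin_i))^1}{A'\mid \vars(\fin_i)^1}$ as a Doob martingale and apply Doob's maximal inequality. Since $A'$ is determined entirely by the first-row assignment $\Vars^1$, the problem reduces to tracking how much information about $\mathbbm{1}_{A'}$ is revealed through $\fin_i$ and its sampled variable values as the algorithm progresses.

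First I set up the filtration. Let $\mathcal{H}_i$ denote the $\sigma$-algebra generated by the random set $\fin_i$ together with the assignment $\vars(\fin_i)^1$; informally, this records which events have been declared fixed by the start of round $i$ and the values their allocated variables were sampled to. The crucial structural observation is that each event $B\in\Part_j$ is either fixed, reverted, or deferred during round $j$ and in no other round, so $\fin_i = \fin_{i+1}\cap\bigl(\bigcup_{j<i}\Part_j\bigr)$ is a deterministic function of $\fin_{i+1}$, and $\vars(\fin_i)^1$ is then just the corresponding restriction of $\vars(\fin_{i+1})^1$. Hence $\mathcal{H}_i\subseteq\mathcal{H}_{i+1}$, so $(\mathcal{H}_i)_{i=1}^{r}$ is a genuine filtration. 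This monotonicity of information is the main non-routine point of the proof; without it one could only union-bound the $E_i$ and would suffer a factor-$r$ loss, which can be polynomial in $d$.

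With the filtration in hand, $M_i=\Exp{\mathbbm{1}_{A'}\mid\mathcal{H}_i}$ is by construction the Doob martingale of the bounded random variable $\mathbbm{1}_{A'}$. Its mean is therefore constant in $i$ and equals $\Pru{\Vars^1}{A'}$, which by the assumed $\Part$-resilience (Definition \ref{def:resilient}) is at most $d^{-\cb}$.

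Finally I apply Doob's maximal inequality to the non-negative martingale $(M_i)$: for any $\lambda>0$,
\[
\Prob{\max_{1\le i\le r} M_i \ge \lambda}\ \le\ \frac{\Exp{M_r}}{\lambda}\ =\ \frac{\Pru{\Vars^1}{A'}}{\lambda}\ \le\ \frac{d^{-\cb}}{\lambda}.
\]
Taking $\lambda=d^{-\ca}$, which is precisely the threshold defining each $E_i$, yields $\Prob{E}\le d^{-\cb+\ca}=d^{-27+2.5}=d^{-24.5}$, as claimed.
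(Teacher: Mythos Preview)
Your proof is correct and is essentially the same argument as the paper's, just packaged in martingale language: the paper's decomposition $E=\dot\bigcup_j E^*_j$ with $E^*_j$ the first time $E_j$ occurs, together with the bound $\Prob{E^*_j}\le \Prob{A'\cap E^*_j}/d^{-\ca}$, is precisely the standard proof of Doob's maximal inequality applied to the nonnegative martingale $M_i=\Exp{\mathbbm{1}_{A'}\mid\mathcal H_i}$. Your observation that $\fin_i=\fin_{i+1}\cap\bigcup_{j<i}\Part_j$ (so that $(\mathcal H_i)$ is a filtration) is exactly what makes the paper's disjoint-union step and the inequality $\sum_j\Prob{A'\cap E^*_j}\le\Prob{A'}$ go through, and both arguments rest on the same implicit identification of $M_i$ with the resampling probability defining $E_i$.
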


\begin{proof}
	We denote $E^*_j$ to be the event that $E_j$ is the \emph{first} of the $E_i$ to occur, i.e. $E^*_j = E_j \cap \bigcap_{i<j}\bar{E_i}$ (where $\bar{E_i}$ denotes the complement of $E_i$). Then, we can rewrite $E$ as the \emph{disjoint} union of the $E^*_j$ ($E= \dot\bigcup_{j\le r} E^*_j$). Consequently,

	\begin{align*}
		\mathbf{Pr}[E] &= \sum_{j\le r} \Prob{E^*_j} =
		\sum_{j\le r} \frac{\mathbf{Pr}[A'\cap E^*_j]}{\Prob{A'\mid E^*_j}}
		\le 
		\sum_{j\le r} \frac{\mathbf{Pr}[A'\cap E^*_j]}{d^{-\ca}}\\
		&=d^{\ca}\sum_{j\le r} \mathbf{Pr}[A'\cap E^*_j]
		\le d^{\ca} \Prob{A'} \le d^{\ca-\cb}=d^{-24.5}\enspace.
	\end{align*}

(Here, for ease of notation, the probabilities are over all $\Vars^1$, but note that the events $E_i$ and $E^*_i$ depend only on the values of $\vars(F_i)^1$, and treat the remaining variables as unfixed.)
\end{proof}

We can use this bound to upper-bound the probability of an event becoming dangerous, and thereby also upper-bound the probability that an event has any of its dependent variables frozen or reverted.

\begin{lemma}\label{lem:Eprob}
	For any event $A$, the probability $A$ has any deferred or reverted variables in $\Vars(A)$ by the end of Algorithm \ref{alg:LLL} is at most $2d^{-20.5}$, even if random choices outside $A$'s $5$-hop neighborhood in $G^{\ents}$ are chosen adversarially.
\end{lemma}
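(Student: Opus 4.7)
The plan is to show that any deferred or reverted variable in $\Vars(A)$ can be traced to some nearby event becoming dangerous, bound the probability that any particular event becomes dangerous via the argument underlying Lemma \ref{lem:safe}, and then take a union bound over the relevant neighborhood. For the tracing step, suppose $v\in\Vars(A)$ is reverted or deferred, and let $C$ be the unique event with $v\in\vars(C)$; since $A$ and $C$ both depend on $v$, we have $C\in N(A)$. If $v$ is reverted then $C$ itself is reverted, so $C$ is within one hop of some dangerous event $B$, placing $B\in N^2(A)$. If $v$ is deferred then $C$ is deferred, hence within two hops of some reverted event $R$, which is within one hop of some dangerous $B$; chaining $A\to C\to R\to B$ gives $B\in N^4(A)$. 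Thus it suffices to bound, for each fixed $B\in N^4(A)$, the probability that $B$ is ever flagged dangerous, and union-bound.

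For such a fixed $B$, I would apply the same ``first-hit'' decomposition as in the proof of Lemma \ref{lem:safe}, but against the algorithm's dangerous condition directly instead of the auxiliary events $E_i$ of that lemma. Let $\mathcal{B}_i$ be the event that $B$ is flagged dangerous at round $i$, and let $\mathcal{B}^*_i := \mathcal{B}_i\setminus\bigcup_{j<i}\mathcal{B}_j$; the $\mathcal{B}^*_i$ are disjoint and their union is exactly the event that $B$ is ever flagged dangerous. On $\mathcal{B}^*_i$, the dangerous predicate gives $\Prob{B'\mid \vars(\fin_i\cup\Part_i\setminus D_i)^1}\ge d^{-\ca}$ (with $B'$ as in Definition \ref{def:resilient}), so $\Prob{B'\cap\mathcal{B}^*_i}\ge d^{-\ca}\Prob{\mathcal{B}^*_i}$. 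Summing over $i$, using disjointness to collapse $\sum_i\Prob{B'\cap\mathcal{B}^*_i}\le\Prob{B'}$, and invoking $\Part$-resilience $\Prob{B'}\le d^{-\cb}$, yields $\sum_i\Prob{\mathcal{B}^*_i}\le d^{\ca-\cb}=d^{-24.5}$.

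Union-bounding over the at most $1+d+d^2+d^3+d^4\le 2d^4$ events $B\in N^4(A)$ (for $d\ge 2$; smaller $d$ are trivial) now yields $\Prob{A\text{ has any deferred or reverted variable in }\Vars(A)}\le 2d^4\cdot d^{-24.5}=2d^{-20.5}$, as required. For the adversarial clause, observe that $B'$ depends only on $\Vars(B)^1$, and each variable in $\Vars(B)$ is allocated to some event in $N(B)\subseteq N^5(A)$; hence none of the probabilities above depend on randomness outside $N^5(A)$, and the bound is robust to any adversarial fixing of that outside randomness. The main subtlety I anticipate is that Lemma \ref{lem:safe}'s $E_i$ conditions on $\vars(\fin_i)^1$ alone, while the algorithm's dangerous predicate conditions on the strictly larger set $\vars(\fin_i\cup\Part_i\setminus D_i)^1$, so one cannot cite that lemma as a black box; the ``first-hit'' calculation must be re-run against the algorithm's conditioning, which is exactly what the second paragraph above does.
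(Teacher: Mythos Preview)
Your proof is correct and follows essentially the same approach as the paper: bound the probability that any fixed event becomes dangerous via a first-hit/optional-stopping argument against the resilience bound $\Prob{B'}\le d^{-\cb}$, then union-bound over the $\le 2d^4$ events within four hops. Your tracing argument in the first paragraph is the explicit version of the paper's footnoted remark that ``dangerous events cause events up to $4$ hops away to have deferred dependent variables.''

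The one structural difference is that the paper routes through the auxiliary events $E_i$ of Lemma~\ref{lem:safe} (which condition on $\vars(\fin_i)^1$) and then uses a short case analysis to argue that $\bar E$ implies $B$ is never flagged dangerous; you instead rerun the first-hit computation directly against the algorithm's dangerous predicate (which conditions on $\vars(\fin_i\cup\Part_i\setminus D_i)^1$). Both are valid. Note, however, that your inequality $\Prob{B'\cap\mathcal{B}^*_i}\ge d^{-\ca}\Prob{\mathcal{B}^*_i}$ is not automatic: the event $\mathcal{B}^*_i$ is measurable with respect to \emph{all} values sampled by round $i$, which also includes $\vars(R_i)^1$, and these are absent from the dangerous predicate's conditioning. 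The missing justification is precisely the paper's second case: if $B$ has any reverted dependent variable prior to round $i$ then all remaining dependent variables of $B$ are deferred and $B$ cannot first become dangerous at round $i$; hence on $\mathcal{B}^*_i$ the filtration restricted to $\Vars(B)$ coincides with the dangerous predicate's conditioning set, and your inequality follows. With that one line added, your direct route and the paper's indirect route are equivalent.
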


\begin{proof}
	By Lemma \ref{lem:safe}, with probability at least $1-d^{-24.5}$, $E$ does not hold. We show that in this case, $A$ cannot become dangerous during the algorithm. 
	
	Assume, for the sake of contradiction, that $A$ first becomes dangerous in round $i$. Then, if none of $A$'s dependent variables have been reverted prior to round $i$, we have $\fin_{i+1} \cap N(A) =(\fin_{i}\cup \Part_i\setminus D_i)\cap N(A)$, and so $\bar{E}_{i+1}$ implies precisely that $A$ does not become dangerous. If, on the other hand, some of $A$'s dependent variables \emph{have} been reverted, then all of $A$'s remaining dependent variables are deferred, so in round $i$ no variable in $\Vars(A)$ is sampled and $A$ therefore cannot become dangerous.
	
	So, any event $A$ becomes dangerous with probability at most $d^{-24.5}$, even if randomness outside its $1$-hop neighborhood is chosen adversarially (since the occurrence of $E$ is dependent only on the sampled values of $\Vars(A)$). Dangerous events cause events up to $4$-hops away\footnote{The $4$-hop distance here is because dangerous events cause events $1$-hop away to revert, reverted events cause events $2$-hops away to defer, and events $1$-hop away from deferred events can therefore have deferred dependent variables.} (of which there are fewer than $2d^4$) to have deferred dependent variables, and so the probability that $A$ has any deferred dependent variables is at most $2d^{-20.5}$, even if randomness outside its $5$-hop neighborhood is adversarial. 
\end{proof}

Lemma \ref{lem:Eprob} will be sufficient to give us one of the important properties we need to apply Lemma \ref{lem:shatlll}: most events will have all their dependent variables fixed by the first stage of the algorithm, and will no longer need to participate. The induced graph on events that still have unfixed variables will shatter into small pieces, as we desired. The remaining thing to show before we can apply Lemma \ref{lem:shatlll} is that the residual LLL instance remaining after this shattering prcess is still solvable:

\begin{lemma}\label{lem:p'}
	After Algorithm \ref{alg:LLL}, i.e. after fixing the values for $\vars(\fin_{r+1}\cup R_{r+1})^1$, all events $A$ have \[\Pru{\Vars\setminus \vars(\fin_{r+1}\cup R_{r+1})^1}{A_{R_{r+1}}|\vars(\fin_{r+1}\cup R_{r+1})^1}\le 2d^{-\ca}\enspace.\]
\end{lemma}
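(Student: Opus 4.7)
The plan is to decompose the target probability through the event $A'$ from Definition \ref{def:resilient} and to invoke a one-step extension of Lemma \ref{lem:safe}. The critical structural step comes first. Setting $S := R_{r+1} \cap N_\vars(A)$, I would argue that $S$ lies in a single part $\Part_{i^*}$ of $\Part$. For if $B_1 \in S \cap \Part_{i_1}$ and $B_2 \in S \cap \Part_{i_2}$ existed with $i_1<i_2$, then as soon as $B_1$ is reverted in round $i_1$ the event $B_2$ would lie within two hops of $B_1$ (through $A$) and hence be deferred in round $i_1$, contradicting its being reverted in round $i_2>i_1$. Since $A$ depends only on variables in $\Vars(A)$, for bounding the probability of $A$ the events $A_{R_{r+1}}$ and $A_S$ coincide.

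Next I would simplify the conditioning. Let $U := \vars(\fin_{r+1}) \cap \Vars(A)$. Because randomness-table entries are independent across rows and columns, and $A_S$ consults $\vars(S)^2$ rather than $\vars(S)^1$, conditioning on $\vars(\fin_{r+1}\cup R_{r+1})^1$ is equivalent to conditioning on $U^1$ alone (variables outside $\Vars(A)$ are also irrelevant to $A$). Splitting on $A'$, and using that $\bar{A'}$ forces $\Pru{\Vars^2}{A_S}<d^{-\cc}$---here the structural step is applied, since $S\subseteq\Part_{i^*}$ is precisely the kind of set quantified in the definition of $A'$---I would derive
\[
\Pru{\Vars(A)\setminus U}{A \mid U^1} \;\le\; d^{-\cc} + \Prob{A' \mid U^1}\enspace.
\]

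Finally, since $A'$ depends only on $\Vars(A)^1 \supseteq U^1$, the quantity $\Prob{A' \mid U^1}$ equals $\Prob{A' \mid \vars(\fin_{r+1})^1}$. Appending one more entry $E_{r+1}:=\{\Prob{A' \mid \vars(\fin_{r+1})^1}\ge d^{-\ca}\}$ to the collection $E_1,\dots,E_r$ of Lemma \ref{lem:safe}, the same telescoping calculation gives $\Prob{E_{r+1}}\le d^{\ca-\cb}$; hence with probability at least $1-d^{\ca-\cb}$ over the algorithm's randomness, $\Prob{A'\mid U^1}<d^{-\ca}$, yielding the desired bound of $2d^{-\ca}$. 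The main obstacle is the structural step: once $S$ has been pinned inside a single part of $\Part$, the resilience hypothesis can be invoked through $A'$, and the remainder is careful but routine bookkeeping in the randomness-table model together with a one-line extension of Lemma \ref{lem:safe}.
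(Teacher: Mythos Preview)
Your structural step is correct and is also (implicitly) the paper's observation: reverted events adjacent to $A$ must all lie in one part $\Part_{i^*}$, so $A_{R_{r+1}} = A_S$ for some $S\subseteq\Part_{i^*}$. Your decomposition $\Prob{A_S \mid U^1} \le d^{-\cc} + \Prob{A'\mid U^1}$ is also sound.

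The gap is in the last step. The lemma is a statement that must hold for \emph{every} execution of Algorithm~\ref{alg:LLL}: whatever values $\vars(\fin_{r+1}\cup R_{r+1})^1$ the algorithm actually produces, the conditional probability is at most $2d^{-\ca}$. This is what is needed downstream, since Lemma~\ref{lem:shatlll} requires the residual instance to satisfy $p'\le \frac{1}{ed^{2.1}}$ deterministically so that the post-shattering deterministic LLL algorithm is guaranteed a solvable instance. Your appeal to a one-step extension of Lemma~\ref{lem:safe} only shows $\Prob{E_{r+1}}\le d^{\ca-\cb}$, i.e.\ that $\Prob{A'\mid U^1}<d^{-\ca}$ holds \emph{with high probability over the algorithm's randomness}. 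On the (rare but possible) executions where $E_{r+1}$ occurs, your bound says nothing, and the lemma's conclusion can genuinely fail for those executions under your argument.

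What you are missing is precisely the mechanism the algorithm builds in to prevent this: the \emph{dangerous} check. The paper's proof never invokes Lemma~\ref{lem:safe}; instead it does a case analysis on whether and when $A$ became dangerous. If $A$ was not dangerous in the relevant round, the dangerous test directly certifies $\Prob{A'\mid \vars(\fin_i\cup\Part_i\setminus D_i)^1}<d^{-\ca}$, and the revert/defer rules guarantee that this conditioning coincides (on $N(A)$) with the final conditioning on $\vars(\fin_{r+1}\cup R_{r+1})^1$. If $A$ \emph{did} become dangerous in some round $i$, the paper falls back to round $i-1$ (where $A$ was not dangerous) and uses that all of $A$'s round-$i$ samples were reverted and all later ones deferred. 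In short, the bound holds deterministically because the algorithm reverts exactly when it would otherwise fail; your argument ignores this and therefore proves only a weaker, probabilistic statement.
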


\begin{proof}	
	We analyze three possible cases:
	
	\paragraph{Case 1: $A$ has no reverted dependent variables.}
	In this case, we have \[(\fin_{r+1}\cup R_{r+1}) \cap N(A)= \fin_{r+1} \cap N(A) = (\fin_{r}\cup \Part_r \setminus D_r) \cap N(A).\] Since $A$ was not dangerous in round $r$, we have \[\Pru{(\Vars \setminus \vars( \fin_{r}\cup \Part_r\setminus D_r))^1}{A'\mid \vars( \fin_{r}\cup \Part_r\setminus D_r)^1}< d^{-\ca},\] i.e., \[\Pru{(\Vars \setminus \vars(\fin_{r+1}\cup R_{r+1}))^1}{A'\mid \vars(\fin_{r+1}\cup R_{r+1})^1}<d^{-\ca}.\]
	
	From the definition of $A'$, it is a superset of the event $\left\{\Pru{\Vars^2}{A_{R_{r+1}}}\ge d^{-\cc}\right\}$. So, 
	
\[\Pru{(\Vars \setminus \vars(\fin_{r+1}\cup R_{r+1}))^1}{\left\{\Pru{\Vars^2}{A_{R_{r+1}}}\ge d^{-\cc}\right\}\mid \vars(\fin_{r+1}\cup R_{r+1})^1}<d^{-\ca}.\]

Then,

\begin{align*}
\Pru{\Vars\setminus \vars(\fin_{r+1}\cup R_{r+1})^1}{A_{R_{r+1}}|\vars(\fin_{r+1}\cup R_{r+1})^1}&= 
\Expu{(\Vars \setminus \vars(\fin_{r+1}\cup R_{r+1}))^1}{\Pru{\Vars^2}{A_{R_{r+1}}}|\vars(\fin_{r+1}\cup R_{r+1})^1}\\
&\hspace{-2in}\le 
\Pru{(\Vars \setminus \vars(\fin_{r+1}\cup R_{r+1}))^1}{\left\{\Pru{\Vars^2}{A_{R_{r+1}}}\ge d^{-\cc}\right\}|\vars(\fin_{r+1}\cup R_{r+1})^1} + d^{-\cc} < 2d^{-\cc}\enspace.
\end{align*}
\paragraph{Case 2: $A$ has reverted dependent variables but did not become dangerous.}

Let $i$ be the round in which $A$ had dependent variables reverted (this can only happen in at most one round, since subsequently all of $A$'s remaining dependent variables are deferred). All of $A$'s dependent variables that were sampled in round $i$ were either fixed or reverted, and all \emph{subsequent} dependent variables were deferred, so we have \[ ( \fin_{i}\cup \Part_i\setminus D_i)\cap N(A)=(\fin_{r+1}\cup R_{r+1}) \cap N(A)\enspace .\]

Since $A$ did not become dangerous in round $i$, we have \[\Pru{(\Vars \setminus \vars( \fin_{i}\cup \Part_i\setminus D_i))^1}{A'\mid \vars( \fin_{i}\cup \Part_i\setminus D_i)^1}< d^{-\ca},\] i.e., 
\[\Pru{(\Vars \setminus \vars(\fin_{r+1}\cup R_{r+1}))^1}{A'\mid \vars(\fin_{r+1}\cup R_{r+1})^1}< d^{-\ca}.\]

From this point we can follow an identical argument to Case 1 to reach:
	
\begin{align*}
	\Pru{\Vars\setminus \vars(\fin_{r+1}\cup R_{r+1})^1}{A_{R_{r+1}}|\vars(\fin_{r+1}\cup R_{r+1})^1}&
 < 2d^{-\cc}\enspace.
\end{align*}
	\paragraph{Case 3: $A$ became dangerous.}
	Let $i$ be the round in which $A$ became dangerous. We have \[(\fin_{r+1}\cup R_{r+1}) \cap N(A)=  (\fin_{i}\cup \Part_i \setminus D_i) \cap N(A).\]
	 Since $A$ was not dangerous in round $i-1$, we have 
	\[\Pru{(\Vars \setminus \vars( \fin_{i-1}\cup \Part_{i-1}\setminus D_{i-1}))^1}{A'\mid \vars( \fin_{i-1}\cup \Part_{i-1}\setminus D_{i-1})^1}< d^{-\ca}.\]
Since $A$ became dangerous in round $i$, it must have had no reverted neighbors in round $i-1$ (since this would have resulted in all of its remaining dependent variables being deferred, and none would have been sampled in round $i$ to make $A$ dangerous). Furthermore, no neighbors of $A$ are fixed in rounds $i$ onwards: they are all either reverted or deferred. So, $( \fin_{i-1}\cup \Part_{i-1}\setminus D_{i-1}) \cap N(A) = \fin_{i} \cap N(A)  = \fin_{r+1} \cap N(A)$, i.e., 
	\[\Pru{(\Vars \setminus \vars( \fin_{r+1}))^1}{A'\mid \vars( \fin_{r+1})^1}< d^{-\ca}.\]	
	
As before, $A'$ is a superset of the event $\left\{\Pru{\Vars^2}{A_{R_{r+1}}}\ge d^{-\cc}\right\}$. So, 
	
	\[\Pru{(\Vars \setminus \vars( \fin_{r+1}))^1}{\left\{\Pru{\Vars^2}{A_{R_{r+1}}}\ge d^{-\cc}\right\}\mid \vars( \fin_{r+1})^1}< d^{-\ca}.\]	
	
By its definition, the event $A_{R_{r+1}}$ is entirely independent of the values of $\vars(R_{r+1})^1$ (it depends only on the \emph{second} sampled values of $\vars(R_{r+1})$, and the first sampled values of the other variables). So, 

	\[\Pru{(\Vars \setminus \vars(\fin_{r+1}\cup R_{r+1}))^1}{\left\{\Pru{\Vars^2}{A_{R_{r+1}}}\ge d^{-\cc}\right\}\mid \vars(\fin_{r+1}\cup R_{r+1})^1}< d^{-\ca}.\]	
	
As in the final step of Case 1, we therefore reach:

\begin{align*}
	\Pru{\Vars\setminus \vars(\fin_{r+1}\cup R_{r+1})^1}{A_{R_{r+1}}|\vars(\fin_{r+1}\cup R_{r+1})^1}&
	< 2d^{-\cc}\enspace.
\end{align*}
	
\end{proof}

This is sufficient to show that the residual post-shattering LLL instance is solvable, and thereby complete the analysis for Algorithm \ref{alg:LLL}:

\begin{theorem}\label{thm:reslll}
	Any $r$-resilient LLL instance, provided with an $r$-partition $\Part$ for which it is $\Part$-resilient, can be solved in $O(r+\log^{O(1)}\log n)$ rounds in \rl, succeeding with high probability.
\end{theorem}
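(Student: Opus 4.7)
The plan is to prove Theorem \ref{thm:reslll} by verifying that Algorithm \ref{alg:LLL}, applied to a $\Part$-resilient instance, satisfies the two hypotheses of Lemma \ref{lem:shatlll}. Since the first stage of Algorithm \ref{alg:LLL} (every line except the final call to Lemma \ref{lem:shatlll}) consists of $r$ synchronous rounds in which each event looks only at its constant-hop neighborhood in $G^\ents$, the first stage runs in $O(r)$ rounds of \LOCAL. Provided Lemma \ref{lem:shatlll} can be invoked afterwards, it contributes an additional $\log^{O(1)}\log n$ rounds to deterministically fix all remaining variables, giving the claimed $O(r+\log^{O(1)}\log n)$ total complexity.

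For the first hypothesis of Lemma \ref{lem:shatlll} -- that each event $A$ has a dependent variable left unfixed with probability at most $(ed)^{-4c}$ even when the random bits outside its $c$-hop neighborhood are adversarial -- I would take $c=5$ and apply Lemma \ref{lem:Eprob}. That lemma establishes exactly this kind of bound: the probability $A$ has any reverted or deferred dependent variable in $\Vars(A)$ is at most $2d^{-20.5}$, and the bound holds even against adversarial randomness outside $A$'s $5$-hop neighborhood in $G^\ents$. Since $2d^{-20.5} \le (ed)^{-20}$ for all sufficiently large $d$, this matches the precondition; the case of small $d$ is easily handled separately, since then every shattered component has constant size and can be solved trivially.

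For the second hypothesis -- that the residual LLL instance has bad-event probabilities bounded by $\tfrac{1}{ed^{2.1}}$ given the fixed values -- I would apply Lemma \ref{lem:p'}, which gives \[\Pru{\Vars\setminus \vars(\fin_{r+1}\cup R_{r+1})^1}{A_{R_{r+1}}\mid \vars(\fin_{r+1}\cup R_{r+1})^1}\le 2d^{-\cc}\] for every $A$, and $2d^{-2.5} \le \tfrac{1}{ed^{2.1}}$ for sufficiently large $d$. The reverted variables $\vars(R_{r+1})$ are then treated as part of the unfixed variables handed off to the deterministic post-shattering algorithm, alongside $\vars(D_{r+1})$ and the variables of deferred events in later parts.

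The main subtlety is the bookkeeping around $A_{R_{r+1}}$ in the second hypothesis: Lemma \ref{lem:p'} is naturally phrased using the ``second row'' of the randomness table for reverted variables, whereas Lemma \ref{lem:shatlll} asks for a bound on the probability of $A$ itself given only the fixed variable values. The identification works because the second-row entries are drawn i.i.d.\ from the same distributions as the original variables, so $\Pru{\Vars^2}{A_{R_{r+1}} \mid \cdot}$ is the distribution the deterministic algorithm faces when it is free to assign all of $\vars(R_{r+1})\cup \vars(D_{r+1})\cup \vars(\ents \setminus (\fin_{r+1}\cup R_{r+1}\cup D_{r+1}))$. Once this correspondence is in place, both conditions of Lemma \ref{lem:shatlll} hold with high probability and applying the lemma finishes the proof.
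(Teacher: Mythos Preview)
Your proposal is correct and follows essentially the same approach as the paper: verify the two preconditions of Lemma~\ref{lem:shatlll} via Lemma~\ref{lem:Eprob} (with $c=5$, yielding $2d^{-20.5}<(ed)^{-20}$) and Lemma~\ref{lem:p'} (yielding $2d^{-\cc}<\tfrac{1}{ed^{2.1}}$), then invoke Lemma~\ref{lem:shatlll} for the final $\log^{O(1)}\log n$ rounds. Your explicit discussion of why the bound on $A_{R_{r+1}}$ translates to the residual-instance bound required by Lemma~\ref{lem:shatlll} is a welcome clarification that the paper leaves implicit.
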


\begin{proof}
The first stage of Algorithm \ref{alg:LLL}, taking $O(r)$ rounds, finds the first sampled values of some of the variables, specifically, $\vars(\fin_{r+1}\cup R_{r+1})^1$. For $\vars(\fin_{r+1})$, we fix these as the final values; the other variables remain unfixed. We now have a new LLL instance, consisting of only the unfixed variables, and the events which are dependent on them.
	
By Lemma \ref{lem:safe}, the probability that any event $A$ does not have all its dependent variables fixed is at most $2d^{-20.5}< (ed)^{-20}$, even if random choices outside $A$'s $5$-hop neighborhood are chosen adversarially. By Lemma \ref{lem:p'}, the probability (over sampling unfixed variables) of each event $A$ occuring is at most $2d^{-\ca} < \frac{1}{ed^{2.1}}$.
	
So, the conditions for Lemma \ref{lem:shatlll} are met, and the call to the corresponding algorithm fixes the remaining variables in such a way that no bad event is satisfied, in $\log^{O(1)}\log n$ rounds, with high probability.
\end{proof}

\section{Light Partitions}\label{sec:partitions}
To effectively use Theorem \ref{thm:reslll} for applications, we must show a variable allocation and event partition such that we reach an $r$-resilient LLL instance, for as low an $r$ as possible. The choice of allocation is generally clear from the application; in this section, we will show how to obtain a good partition.

The property we want for our partition is that each node has few neighbors in each part of the partition. In our LLL algorithm, this will mean that a node can have only few of its variables reverted, which will make it easier to satisfy resilience. The formal definition we will use is the following:

\begin{definition}
	An $x$-light partition of a graph $G$ of maximum degree $\Delta$ is a partition of nodes into $\frac{\Delta}{x}$ parts such that each node has $O(x)$ neighbors in each part. 
\end{definition}

This definition is very similar to the concept of a \emph{frugal coloring}, a classic application of the LLL introduced by Hind, Molloy and Reed \cite{HMR97}. The only difference is that a frugal coloring is, under most definitions, required to be a proper coloring (i.e. to have no monochromatic edges), which we do not require from a light partition. This weakening of the definition is important, since frugal colorings using fewer than $\Delta$ colors do not, in general, exist, while we will be concerned with light partitions using $o(\Delta)$ parts.

To find such a partition, we will use our LLL algorithm for resilient instances, equipped with the trivial $1$-partition (all events in the same part). The resulting light partitions can then be used for our subsequent applications (including the general LLL result), in a `bootstrapping' fashion. 

We will prove the following lemma:

\begin{lemma}\label{lem:frugal1}
A $\log \Delta$-light partition  can be found in $\log^{O(1)}\log n$ rounds in \rl, succeeding with high probability.
\end{lemma}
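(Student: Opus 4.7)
\begin{proofs}
The plan is to cast the light-partition problem as an LLL instance and feed it into Algorithm \ref{alg:LLL} equipped with the trivial 1-partition $\Part = \{\ents\}$. Since Theorem \ref{thm:reslll} with $r=1$ runs in $\log^{O(1)}\log n$ rounds w.h.p., the main task reduces to setting up the instance and verifying resilience.

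Setup. Let $\Delta' := \Delta/\log\Delta$. Give each vertex $v \in V$ a single variable $\chi(v)$ drawn uniformly from $[\Delta']$, and for each pair $(v,c)$ with $c \in [\Delta']$ introduce the bad event $A_{v,c}$ asserting that at least $T := C\log\Delta$ neighbors of $v$ receive color $c$, for a sufficiently large absolute constant $C$. Take $\Vars(A_{v,c}) = \{\chi(u) : u \in N[v]\}$ and allocate each $\chi(v)$ to $A_{v,1}$ (so the events $A_{v,c}$ with $c \neq 1$ have empty allocation, which Algorithm \ref{alg:LLL} handles trivially). Two events are adjacent in $G^{\ents}$ iff the corresponding vertices lie within distance $2$ in $G$, giving $d = O(\Delta^3/\log\Delta)$.

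The step I expect to be the main obstacle is verifying $\Part$-resilience with the constants of Definition \ref{def:resilient}. Fix $A_{v,c}$ and let $X$ be the number of neighbors of $v$ whose first-sampled color is $c$; $X$ is stochastically dominated by $\mathrm{Binomial}(\Delta, 1/\Delta')$ with mean at most $\log\Delta$. Only the second-sampled values of the variables $\chi(u)$ with $u \in N(v)$ can influence $A_{v,c}$, so for any $S \subseteq \ents$, writing $S^* := \{u \in N(v) : A_{u,1} \in S\}$, the event $(A_{v,c})_S$ is the inequality
\[
|\{u \in S^* : \chi(u)^2 = c\}| \;\geq\; T - X + |S^* \cap \{u : \chi(u)^1 = c\}| \;\geq\; T - X.
\]
I would then use two Chernoff bounds: (i) $\Prob{X \geq T/2} \leq d^{-27}$; and (ii) conditional on $X < T/2$, $\Pru{\Vars^2}{(A_{v,c})_S} \leq \Prob{\mathrm{Binomial}(\Delta, 1/\Delta') \geq T/2} \leq d^{-2.5}$, uniformly over $S$. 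Both tails compare a mean of $\log\Delta$ against a threshold of $\Theta(C\log\Delta)$, and choosing $C$ as a sufficiently large absolute constant makes the tails smaller than any fixed polynomial in $d = O(\Delta^3/\log\Delta)$. Combining the two bounds, $A'_{v,c}$ forces $X \geq T/2$, hence $\Prob{A'_{v,c}} \leq d^{-27}$, as required.

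With 1-resilience in hand, Theorem \ref{thm:reslll} produces, in $\log^{O(1)}\log n$ rounds of \rl\ w.h.p., a color assignment avoiding every $A_{v,c}$. This partitions $V$ into $\Delta' = \Delta/\log\Delta$ parts with each vertex having $O(\log\Delta)$ neighbors per part -- exactly a $\log\Delta$-light partition. The only delicate point is arranging the polynomial slack between $d^{-2.5}$ and $d^{-27}$, which the $\Theta(\log\Delta)$ gap between the binomial mean and threshold supplies.
\end{proofs}
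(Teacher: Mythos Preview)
Your proof is correct and follows essentially the same approach as the paper: formulate the random partition as an LLL instance, verify $1$-resilience via Chernoff bounds on the number of neighbors landing in a given part, and invoke Theorem~\ref{thm:reslll}. The only difference is cosmetic---you create a separate bad event $A_{v,c}$ for each (vertex, color) pair, whereas the paper has a single event ${}_vA$ per vertex (with the union over colors inside the event definition); this inflates your $d$ from $O(\Delta^2)$ to $O(\Delta^3/\log\Delta)$ but is otherwise immaterial, and your containment $A'_{v,c}\subseteq\{X\ge T/2\}$ is exactly the paper's containment $A'\subseteq A^*$ specialized to one color.
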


We will show how to formulate the problem as a $1$-resilient LLL instance; the lemma then follows from Theorem \ref{thm:reslll}.\footnote{In fact, since $1$-resilience is equivalent in power to fragility from \cite{GHK18}, one could alternatively employ Theorem I.7 of \cite{GHK18}. However, this proof will serve as a useful introductory `warm-up’ to the use of Theorem \ref{thm:reslll}, which we will need for all of our later applications since they will be $r$-resilient only for $r=\omega(1)$.}
The LLL instance will be based on the following very simple random process: each vertex picks one of the $\Delta/\log\Delta$ parts uniformly at random. Consequently, our LLL graph $G^{\ents}$ is constructed as follows: 

\begin{itemize}
	\item The vertex set corresponds to that of the input graph ($\ents = \{_vA:v\in V\}$).
	\item The set of variables $\Vars$ will consist of one variable $part(v)$ for each $v\in V$, and this variable will be allocated to $_vA$ (i.e. $\vars(_vA) = \{part(v)\}$).
	\item These variables $part(v)$ are each uniformly distributed in $[\Delta/\log\Delta]$.
	\item The \emph{bad event} $_vA$ is that $v$ has more than $99\log \Delta$ neighbors in some part (so avoiding all bad events would imply a $\log\Delta$-light partition).
	\item We therefore have the  edge $\{_vA,\ _wA\}$ in $E(\aG)$ iff edge $\{v,w\}$ is in $E(V)$, i.e. $\ad = \Delta$.
	\item Events $_vA\ne$ $_wA$ are adjacent in $G^{\ents}$ iff $dist_G(v,w)\le 2$ (i.e. $d< 2\Delta^2$).
	\item The trivial $1$-partition $\Part$ is $\{\{\ents\}\}$.
\end{itemize}  

We will show that this LLL instance is $1$-resilient. To do so, we must meet the following definition (Definition \ref{def:resilient}, simplified for $r=1$):

\begin{definition}
	Given an LLL instance, equipped with variable allocation \vars\, for each event $A\in\ents$ let $A'$ be the event that 
	\[\bigcup_{S\subseteq \ents}\left\{\Pru{\Vars^2}{A_{\vars(S)}}\ge d^{-\cc}\right\} \enspace.\]
	
	We call the instance $1$-resilient if for each event $A$, 
	\[\Pru{\Vars^1}{A'\mid \Vars^1}  \le d^{-\cb}  \enspace.\]	
\end{definition}

For each $_vA$, we define $_vA^*$ to be the event that $v$ has at most $49\log\Delta$ neighbors in some part, under the values (part assignment) $\Vars^1$.

We will now prove an upper bound on the probabilities of the $_vA^*$, and then prove that $_vA^*$ is a superset of $_vA'$ (and hence the upper bound on probability also applies to the $_vA'$). 

\begin{lemma}\label{lem:astar}
	For any $v$, $\Pru{\Vars^1}{_vA^*}< d^{-\cb}$.
\end{lemma}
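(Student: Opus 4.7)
The plan is a straightforward concentration argument: a multiplicative Chernoff bound applied part-by-part, followed by a union bound over parts. Under $\Vars^1$, each of $v$'s $\Delta$ neighbors independently picks a part uniformly at random from the $\Delta/\log\Delta$ available parts, so for any fixed part $i$ the count $X_i$ of $v$'s neighbors assigned to part $i$ is a sum of $\Delta$ independent Bernoulli indicators with mean $\Exp{X_i}=\log\Delta$.

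I would then apply the multiplicative Chernoff bound in the form $\Prob{X_i\ge t}\le (e\mu/t)^t$ with $\mu=\log\Delta$ and $t=49\log\Delta$, obtaining $\Prob{X_i\ge 49\log\Delta}\le (e/49)^{49\log\Delta}$. Since $\log_2(49/e)>4$, this simplifies to at most $\Delta^{-c_0}$ for a constant $c_0>200$. A union bound over the $\Delta/\log\Delta\le \Delta$ parts then yields $\Pru{\Vars^1}{{_v}A^*}\le \Delta^{-c_0+1}$, interpreting $_vA^*$ as the event that \emph{some} part receives at least $49\log\Delta$ of $v$'s neighbors (which is what the lemma requires; the complementary reading would make the claimed bound false).

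To finish, I would translate this to a bound in terms of $d$ using the relation $d<2\Delta^2$ established in the allocated-LLL-graph discussion: $d^{-\cb}\ge 2^{-27}\Delta^{-54}$, which, for any $\Delta$ above a small absolute constant, is much larger than $\Delta^{-c_0+1}$. Instances with $\Delta$ smaller than this constant are trivial, since then each node's neighborhood admits only $O(1)$ part-configurations and the lemma can be checked directly (or such instances can simply be solved in $O(1)$ \local{} rounds without invoking the LLL machinery).

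There is no real obstacle here beyond bookkeeping the constants: the key thing to verify is that the gap between the threshold $49\log\Delta$ in the definition of $_vA^*$ and the expectation $\log\Delta$ is wide enough that the Chernoff exponent dominates both the $\Delta$-factor from the union bound over parts \emph{and} the target exponent $\cb=27$ after converting $d\le 2\Delta^2$. The arithmetic above leaves comfortable slack, so the proof really is just Chernoff plus a union bound.
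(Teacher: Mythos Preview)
Your proof is correct and follows essentially the same route as the paper: a multiplicative Chernoff bound on the number of $v$'s neighbors landing in a fixed part, then a union bound over the $\Delta/\log\Delta$ parts, and finally the conversion via $d<2\Delta^2$ (the paper uses the Chernoff form $\Prob{e(i)\ge\delta\mu}\le e^{(3-\delta)\mu}$ with $\delta=49\log\Delta/\mu$, but the arithmetic lands in the same place). You are also right that the stated definition of $_vA^*$ has ``at most'' where ``at least'' (or ``more than'') is clearly intended, as both the proof here and the use in Lemma~\ref{lem:a'astar} confirm.
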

\begin{proof}
	For each part $i$ in $[\Delta/\log\Delta]$, denote by $e(i)$ the number of neighbors of $v$ taking part $i$ under $\Vars^1$, and by $\mu$ its expectation: we have $\mu\le \log\Delta$. By a standard Chernoff bound, for $\delta \ge 1$,
	
	\[\Prob{e(i)\ge \delta\mu}
	\le e^{(3-\delta)\mu}\enspace.
	\] 
	
	We will set $\delta	= 49\log\Delta/\mu $.  Then (assuming $\Delta$ is at least a sufficiently large constant, since otherwise the lemma is trivial):
	
	\begin{align*}
		\Prob{e(i)\ge 49\log\Delta}
		&\le e^{3\mu-49\log\Delta}\le e^{-46\log\Delta}<\Delta^{-66}<d^{-30}\enspace.
	\end{align*}
	
	Taking a union bound over all $\Delta/\log\Delta$ parts, $\Pru{\Vars^1}{_vA^*}< d^{-\cb}$.
\end{proof}

\begin{lemma}\label{lem:a'astar}
	For any $A$, $\Pru{\Vars^1}{A'\mid \Vars^1}< d^{-\cb}$.
\end{lemma}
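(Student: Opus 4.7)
The plan is to reduce the lemma to Lemma~\ref{lem:astar} by establishing the event inclusion ${}_vA' \subseteq {}_vA^*$ as events on $\Vars^1$. Since both events are determined by $\Vars^1$, this inclusion combined with Lemma~\ref{lem:astar} immediately yields $\Pru{\Vars^1}{{}_vA'} \le \Pru{\Vars^1}{{}_vA^*} < d^{-\cb}$, which is the desired bound.

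To prove the inclusion I would argue the contrapositive: assuming every part contains fewer than $49\log\Delta$ neighbors of $v$ under $\Vars^1$, I show that for every $S \subseteq \ents$ we have $\Pru{\Vars^2}{{}_vA_S} < d^{-\cc}$. Only the variables $part(w)$ for $w\in N(v)$ affect ${}_vA$, so it suffices to consider $S \subseteq N(v)$. Under the mixed assignment that defines ${}_vA_S$ (the vertices in $S$ use their $\Vars^2$ part, the rest their $\Vars^1$ part), the number of neighbors of $v$ placed in part $i$ decomposes as $c_i + X_i$, where $c_i < 49\log\Delta$ is the (now fixed) contribution from $N(v)\setminus S$ under $\Vars^1$, and $X_i$ is the random contribution from $S$ under $\Vars^2$. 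Thus ${}_vA_S$ forces $X_i > 50\log\Delta$ for some part $i$.

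Each $X_i$ is a sum of $|S|\le\Delta$ independent Bernoullis of success probability $\log\Delta/\Delta$, so $\mu_i := \Exp{X_i} \le \log\Delta$. Applying the same Chernoff estimate used in Lemma~\ref{lem:astar}, with $\delta = 50\log\Delta/\mu_i \ge 1$, gives $\Prob{X_i \ge 50\log\Delta} \le e^{3\mu_i - 50\log\Delta} \le \Delta^{-47}$. A union bound over the $\Delta/\log\Delta$ parts then yields $\Pru{\Vars^2}{{}_vA_S} \le \Delta^{-46}$, which is comfortably below $d^{-\cc}$ since $d < 2\Delta^2$.

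The natural worry is that the witness $S$ in the definition of ${}_vA'$ ranges over \emph{all} subsets of $\ents$, which precludes a brute-force union bound over $S$. The key observation making the argument work is that no such union bound is needed: the Chernoff estimate above holds uniformly in $S$, since it depends on $S$ only through the bound $\mu_i \le \log\Delta$, which is valid for every $|S|\le\Delta$. Hence, conditional on ${}_vA^*$ failing, no subset $S$ can witness ${}_vA'$, completing the inclusion and thus the lemma.
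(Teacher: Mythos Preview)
Your proof is correct and follows essentially the same route as the paper: both establish the inclusion ${}_vA' \subseteq {}_vA^*$ via the contrapositive, decompose the part-$i$ count under the mixed assignment as a fixed $\Vars^1$-contribution plus a random $\Vars^2$-contribution, and use a Chernoff bound on the latter that is uniform in $S$ so that no union bound over subsets is needed. The only cosmetic difference is that the paper bounds the $\Vars^2$-contribution from the \emph{entire} neighborhood once (which then dominates the contribution from any $S$), whereas you bound the $\Vars^2$-contribution from $S$ directly for each $S$ and observe the estimate does not depend on $S$; both yield the same conclusion.
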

\begin{proof}
	By the same argument as above, the probability that $v$ has at least  $49\log\Delta$  neighbors in any one part under variables' \emph{second} sampled values $\Vars^2$ is at most $d^{-\cb}$. If this does not occur, then for any set $S$, replacing the values $\vars(S)^1$ by $\vars(S)^2$ cannot increase the number of $v$'s  neighbors in a particular part by more than $49\log\Delta$.
	
	Now, if we take any assignment of $\Vars^1$ which satisfies $_vA'$, we know (by definition of $_vA'$) that for any $S\subseteq \ents$, $\Pru{\Vars^2}{_vA_{\vars(S)}}\ge d^{-\cc}$. So, we know that under $\Vars^1$, $v$ has at least $99\log\Delta - 49\log\Delta = 50\log\Delta$ neighbors in some part (since otherwise we would have $\Pru{\Vars^2}{_vA_{\vars(S)}}< d^{-\cb}$). Then, $\Vars^1$ satisfies $A^*$, so $A' \subseteq A^*$. This means that $\Pru{\Vars^1}{A'}\le \Pru{\Vars^1}{A^*} < d^{-\cb}$ by Lemma \ref{lem:astar}.
\end{proof}

We have now met the condition of $1$-resilience, which is sufficient to prove Lemma \ref{lem:frugal1}:

\begin{proof}[Proof of Lemma \ref{lem:frugal1}]
Our LLL instance for $\log \Delta$-light partition is $1$-resilient, so by Theorem \ref{thm:reslll} can be solved in $\log^{O(1)} \log m$ rounds of \rl\ (succeeding with high probability in $n$).
\end{proof}

Note that a $\log\Delta$-light partition implies an $x$-light partition for any $x\ge \log \Delta$:

\begin{corollary}\label{cor:frugal1}
For any $x\ge \log\Delta$, an $x$-light partition  can be computed in $\log^{O(1)}\log n$ rounds in \rl, succeeding with high probability.
\end{corollary}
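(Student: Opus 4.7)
The plan is to derive the corollary directly from Lemma \ref{lem:frugal1} by a local merging argument that requires no additional communication rounds.

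First, I would invoke Lemma \ref{lem:frugal1} to obtain, in $\log^{O(1)}\log n$ rounds of \rl\ and with high probability, a $\log\Delta$-light partition $\Part = \{\Part_1, \dots, \Part_{\Delta/\log\Delta}\}$ of $G$. By definition, every vertex $v$ has $O(\log\Delta)$ neighbors (in $G$) inside each $\Part_i$.

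Next, I would coarsen $\Part$ down to $\Delta/x$ parts by grouping blocks of the existing parts together. Concretely, fix any partition $B_1, \dots, B_{\Delta/x}$ of $[\Delta/\log\Delta]$ into blocks of size $\lceil x/\log\Delta\rceil$ (which is well defined because $x\ge\log\Delta$), and let
\[\Part'_j := \bigcup_{i\in B_j}\Part_i\enspace.\]
Since the block structure is a globally fixed combinatorial object depending only on $x$, $\Delta$, and $\log\Delta$, each vertex can compute its new index $j$ locally from its old index $i$ with zero additional rounds of communication.

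Finally, the neighbor bound is immediate by summation: for any vertex $v$ and any $j$,
\[\bigl|\{w\in \Part'_j : \{v,w\}\in E(G)\}\bigr|
=\sum_{i\in B_j}\bigl|\{w\in \Part_i : \{v,w\}\in E(G)\}\bigr|
=O\bigl(\lceil x/\log\Delta\rceil\cdot \log\Delta\bigr)=O(x)\enspace,\]
so $\{\Part'_j\}$ is an $x$-light partition. The whole argument is mechanical and I anticipate no real obstacle; the only minor bookkeeping is the ceiling when $x/\log\Delta$ is non-integral, which affects only the hidden constants, and the boundary case $x = \Theta(\Delta)$, handled trivially by the single-part partition.
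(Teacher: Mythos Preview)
Your proposal is correct and follows essentially the same approach as the paper: invoke Lemma~\ref{lem:frugal1} to get a $\log\Delta$-light partition, then merge its $\Delta/\log\Delta$ parts into $\Delta/x$ groups of size roughly $x/\log\Delta$ each, and observe that the neighbor count in each merged part is $O(\log\Delta)\cdot\lceil x/\log\Delta\rceil = O(x)$. The paper's proof is slightly terser but the idea is identical.
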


\begin{proof}
Compute a $\log\Delta$-light partition using Lemma \ref{lem:frugal1}, and then lexicographically group the $\frac{\Delta}{\log\Delta}$ parts into $\frac{\Delta}{x}$ superparts as equally as possible (i.e., with either $\lfloor\frac{x}{\log\Delta}\rfloor$ or $\lceil\frac{x}{\log\Delta}\rceil$ parts in each superpart). Each vertex now has at most $O(\log\Delta) \cdot \lceil\frac{x}{\log\Delta}\rceil = O(x)$ neighbors in each superpart, constituting an $x$-light partition.
\end{proof}

\section{General Lov\'asz Local Lemma}\label{sec:genLLL}
Recall that in general LLL instances, we are given an LLL graph $G^{\ents}$ as our input (and communication graph), in which each node represents a bad event which occurs with probability at most $p$, and events sharing dependent variables are joined with edges. The maximum degree in $G^{\ents}$ is denoted $d$.

In our case, we also assume that we have a variable allocation $\vars$ - a specific allocation with good properties may be apparent from applications; if not, we can take an arbitrary allocation. This defines the allocated LLL graph $\aG$, a subgraph of $G^{\ents}$, with maximum degree at most $\ad$ (and recall further that $\ad\le d <2\ad^2$). 

In this section we prove the following result for the general LLL:

\begin{theorem}\label{thm:LLL}
There is some constant $c$ such for $1\le r \le \frac{\ad}{\log \ad}$, LLL with criterion $p\le 2^{\frac{-c\ad}{r}}$ can be solved in $O(r+\log^{O(1)}\log n)$ rounds of \rl, succeeding with high probability in $n$.
\end{theorem}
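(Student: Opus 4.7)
The plan is to reduce Theorem \ref{thm:LLL} to the resilient-LLL result (Theorem \ref{thm:reslll}) by equipping the instance with an appropriate light partition of the allocated graph $\aG$ and verifying resilience under the strengthened criterion on $p$.

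First I would invoke Corollary \ref{cor:frugal1} on $\aG$ (which has maximum degree $\ad$) with parameter $x = \ad/r$. The hypothesis $r \le \ad/\log\ad$ is exactly what is needed for $x \ge \log\ad$, so the corollary produces an $x$-light partition $\Part = \{\Part_1, \dots, \Part_r\}$ in $\log^{O(1)}\log n$ rounds w.h.p., with the guarantee that every event $A$ satisfies $|N_\vars(A) \cap \Part_i| = O(\ad/r)$ for every part $i$.

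Next I would verify $\Part$-resilience. Fix an event $A$; since $A_S$ depends only on variables allocated to events in $N_\vars(A)$, we may replace $S \subseteq \Part_i$ by $S \subseteq \Part_i \cap N_\vars(A)$ without changing $A_S$, so the union in the definition of $A'$ effectively ranges over at most $r \cdot 2^{O(\ad/r)}$ distinct events. For each such $S$, since both $\Vars^1$ and $\Vars^2$ are drawn from the correct product distribution, $\Prob{A_S} = \Prob{A} \le p$, so Markov's inequality gives
\begin{align*}
\Pru{\Vars^1}{\Pru{\Vars^2}{A_S} \ge d^{-\cc}} \;\le\; d^{\cc} \cdot \Expu{\Vars^1}{\Pru{\Vars^2}{A_S}} \;=\; d^{\cc}\cdot\Prob{A_S} \;\le\; p\cdot d^{\cc}.
\end{align*}
A union bound then yields $\Pru{\Vars^1}{A'} \le r \cdot 2^{O(\ad/r)} \cdot p \cdot d^{\cc}$. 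Using $d < 2\ad^2$ and the hypothesis $r \le \ad/\log\ad$, we have $\log d = O(\log\ad) = O(\ad/r)$ and $\log r = O(\ad/r)$, so all additive contributions in the exponent are $O(\ad/r)$. Taking the constant $c$ in $p \le 2^{-c\ad/r}$ sufficiently large therefore drives the bound below $d^{-\cb}$, establishing $\Part$-resilience. Plugging $\Part$ into Theorem \ref{thm:reslll} then solves the instance in $O(r + \log^{O(1)}\log n)$ rounds, and a union bound over the two probabilistic phases yields overall success w.h.p.

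The main obstacle is the resilience verification: one must absorb a union bound of size exponential in $\ad/r$ (over subsets $S$) into the criterion on $p$. The light partition is precisely what enables this, because it caps each $|N_\vars(A) \cap \Part_i|$ at $O(\ad/r)$, matching the budget $\log(1/p) = \Theta(c\ad/r)$. An arbitrary $r$-partition could leave some $|N_\vars(A)\cap \Part_i|$ as large as $\ad$, which would force the much stronger criterion $p \le 2^{-\Theta(\ad)}$ independently of $r$ and destroy the trade-off; thus the bootstrapping through Lemma \ref{lem:frugal1} is essential, not merely convenient.
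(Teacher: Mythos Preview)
Your proposal is correct and follows essentially the same route as the paper: compute an $(\ad/r)$-light partition of $\aG$ via Corollary~\ref{cor:frugal1}, restrict each $S$ to $\Part_i\cap N_\vars(A)$, apply Markov's inequality to bound $\Pru{\Vars^1}{\Pru{\Vars^2}{A_S}\ge d^{-\cc}}$ by $p\,d^{\cc}$, and union-bound over the $r\cdot 2^{O(\ad/r)}$ choices to establish $\Part$-resilience before invoking Theorem~\ref{thm:reslll}. The paper carries out the identical computation with an explicit constant $\gamma$ for the light-partition bound and sets $c=\gamma+80$, but there is no substantive difference.
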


Note that this statement involves $\ad$ rather than $d$ (as in the statement of Theorem \ref{thm:LLLsimp}); since $\ad$ can be significantly lower than $d$ in some applications, this makes the statement stronger (and is crucial in our applications to defective coloring and edge coloring later). In cases where one does not have an allocation with good properties, and so $\ad$ is not significantly lower than $d$, Theorem \ref{thm:LLLsimp} can be used for simplicity. We may assume throughout that $d$ and $\ad$ are at least sufficiently large constants, since otherwise the theorem follows from Fischer and Ghaffari \cite{FG17}. 

\begin{proof}

Our aim will be to show that an LLL instance with criterion $p\le 2^{\frac{-c\ad}{r}}$ is $r$-resilient. To do so, we will first compute a $\frac{\ad}{r}$-light partition of $\aG$ to use as our event partition, which can be done in $O(\log^{O(1)}\log n)$ rounds by Corollary \ref{cor:frugal1}. $\Part$ therefore has $r$ parts, and any event $A$ has $O(\frac{\ad}{r})$ neighbors in any $\Part_i$. We will need to reason about the constant in this $O()$ notation, so let $\gamma$ be a sufficiently large constant that any $A$ has fewer than $\frac{\gamma\ad}{r}$ neighbors (in $\aG$) from any $\Part_i$, i.e. \[\forall A\in \ents, i \in [r],: |N_\vars(A)\cap \Part_i| \le \frac{\gamma\ad}{r}\enspace.\]

Recall that, for any $A$, $A'$ is defined to be the event:

\[\bigcup_{\substack{i \le r,\\S\subseteq \Part_i}}\left\{\Pru{\Vars^2}{A_{\vars(S)}}\ge d^{-\cc}\right\}\enspace.\]

We must now show that the instance is $r$-resilient, i.e., for each event $A$, 
$\Pru{\Vars^1}{A'\mid \Vars^1}  \le d^{-\cb} $.

To do so, we simply take a union bound over all $i$ and $S\cap N_\vars(A)$: 

\begin{align*}
\Pru{\Vars^1}{\bigcup_{\substack{i \le r,\\S\subseteq \Part_i}}\left\{\Pru{\Vars^2}{A_{\vars(S)}}\ge d^{-\cc}\right\}\mid \Vars^1}
&\le \sum_{\substack{i \le r,\\S\subseteq (\Part_i\cap N_\vars(A))}}\Pru{\Vars^1}{\left\{\Pru{\Vars^2}{A_{\vars(S)}}\ge d^{-\cc}\right\}\mid \Vars^1}\\
&\le \sum_{\substack{i \le r,\\S\subseteq (\Part_i\cap N_\vars(A))}}\frac{\Prob{A}}{d^{-\cc}}\\
&\le r\cdot 2^{\frac{\gamma\ad}{r}} \cdot  pd^{\cc}\\
&\le \ad \cdot 2^{\frac{\gamma\ad}{r}} \cdot  2^{\frac{-c\ad}{r}}d^{\cc}\enspace.
\end{align*}

We set $c=\gamma+80$, and use $d<2\ad^2$:

\begin{align*}
\Pru{\Vars^1}{A'\mid \Vars^1}  &\le \ad  \cdot  2^{\frac{-80\ad}{r}}d^{\cc}
\le \ad  \cdot  \ad^{-80}d^{\cc}
= \ad^{-79}d^{\cc}\le d^{-\cb}\enspace.
\end{align*}

We have thus proven $r$-resilience. The theorem then follows by Theorem \ref{thm:reslll}.

\end{proof}

Theorem \ref{thm:LLL} gives us a trade-off between the LLL criterion and the round complexity. To illustrate the two end-points of this trade-off, we give the following corollaries:

\begin{corollary}
There is some constant $c$ such that LLL with $p
\le d^{-c}$ can be solved in $O(\frac{\ad}{\log \ad} +\log^{O(1)}\log n)$ rounds of \rl, succeeding with high probability.\qed
\end{corollary}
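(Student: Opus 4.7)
The plan is to instantiate Theorem \ref{thm:LLL} at the extreme endpoint $r := \lfloor \ad/\log \ad \rfloor$ of the trade-off. As in the proof of Theorem \ref{thm:LLL}, I may assume $\ad$ exceeds a sufficiently large constant, since otherwise the corollary follows directly from Fischer and Ghaffari \cite{FG17}. Under that assumption $r$ is a positive integer with $1 \le r \le \ad/\log \ad$, so it lies in the range for which Theorem \ref{thm:LLL} applies.

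With this choice of $r$, a short estimate gives $\ad/r \le \log\ad + O(1)$ (since $r \ge \ad/\log\ad - 1$ and $\log\ad/r = o(1)$ for large $\ad$), so the criterion $p \le 2^{-c\ad/r}$ of Theorem \ref{thm:LLL} (with $c$ the constant supplied by that theorem) is implied by the stronger condition $p \le \ad^{-c'}$ for any constant $c' \ge c + O(1)$. Because $\ad \le d$, this is in turn implied by $p \le d^{-c'}$. Taking $c'$ to be the constant named in the corollary statement then yields the desired implication: any instance satisfying the corollary's hypothesis satisfies the hypothesis of Theorem \ref{thm:LLL} for this $r$.

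Plugging into Theorem \ref{thm:LLL} produces a solution in $O(r + \log^{O(1)}\log n) = O(\ad/\log\ad + \log^{O(1)}\log n)$ rounds of \rl, succeeding with high probability, which is exactly what the corollary asserts. There is essentially no obstacle here: the substance was already done in Theorem \ref{thm:LLL}, and this corollary is simply the observation that at $r = \ad/\log\ad$ the exponential criterion $2^{-c\ad/r}$ collapses to a polynomial-in-$\ad$ (hence polynomial-in-$d$) bound on $p$, so the polynomially-weakened regime falls out as the appropriate endpoint of the trade-off.
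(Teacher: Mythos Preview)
Your proposal is correct and matches the paper's intent exactly: the paper marks this corollary with \qed\ and gives no separate argument, treating it as the immediate instantiation of Theorem \ref{thm:LLL} at the endpoint $r = \ad/\log\ad$. Your handling of the floor, the estimate $\ad/r \le \log\ad + O(1)$, and the passage from $\ad$ to $d$ via $\ad \le d$ are all straightforward and correct.
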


This result is for LLL with polynomially-weakened LLL criterion, and improves the $O(d^2+\log^{O(1)}\log n)$ bound of Fischer and Ghaffari's algorithm \cite{FG17,GHK18,RG20} to $O(\frac{d}{\log d}+\log^{O(1)}\log n)$ (with worst-case allocation; the improvement may be larger if $\ad<d$). Note also that Chung, Pettie, and Su's LLL algorithm \cite{CPS17} solves such instances in $O(\log_d n)$ rounds, so the overall complexity upper bound is now $O(\min\{\frac{\ad}{\log \ad} +\log^{O(1)}\log n,\frac{\log n}{\log \log n}\})$. 

\begin{corollary}
There is some constant $c$ such that LLL with $p
	\le \min \{\ad^{-c}, 2^{\frac{-\ad}{\log^{O(1)}\log n}}\}$ can be solved in $\log^{O(1)}\log n$ rounds of \rl, succeeding with high probability.
\end{corollary}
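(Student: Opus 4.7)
The plan is to derive this corollary by a direct application of Theorem~\ref{thm:LLL}, combined via a small case split with the polynomially-weakened corollary just above. Let $c_T$ denote the constant supplied by Theorem~\ref{thm:LLL}, and let $k$ be a constant large enough that the exponent hidden in $2^{-\ad/\log^{O(1)}\log n}$ is at most $\log^k\log n$; the constant $c$ promised by the corollary statement will be taken to be the maximum of those produced by the two branches below.

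In the first branch I assume $\ad/\log\ad \ge c_T\log^k\log n$. Here I apply Theorem~\ref{thm:LLL} with $r := \lceil c_T\log^k\log n\rceil$. This value of $r$ lies in the permitted range $[1,\ad/\log\ad]$ by the branch hypothesis, so the theorem applies. The criterion $p \le 2^{-\ad/\log^k\log n}$ immediately implies $p \le 2^{-c_T\ad/r}$, matching the theorem's hypothesis. The theorem then produces a solution in $O(r + \log^{O(1)}\log n) = \log^{O(1)}\log n$ rounds.

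In the second branch I assume $\ad/\log\ad < c_T\log^k\log n$. Taking logarithms twice shows $\log\ad = O(\log\log\log n)$ and hence $\ad = \log^{O(1)}\log n$. Since the criterion also enforces $p \le \ad^{-c}$ with $c$ at least the constant required by the polynomially-weakened corollary above, that corollary applies and yields round complexity $O(\ad/\log\ad + \log^{O(1)}\log n) = \log^{O(1)}\log n$ in this branch as well.

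Both branches yield $\log^{O(1)}\log n$ rounds, so the corollary follows. There is no substantive obstacle here: the result is essentially a reparameterisation of the trade-off in Theorem~\ref{thm:LLL}, and the two terms of the $\min$ in the criterion line up naturally with the two branches of the case split — the exponential bound is what the theorem itself needs when $\ad$ is large, while the polynomial bound, via the previous corollary, covers the remaining small-$\ad$ regime where the constraint $r\le\ad/\log\ad$ of Theorem~\ref{thm:LLL} would otherwise bind.
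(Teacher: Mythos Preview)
Your proposal is correct and is precisely the natural derivation the paper intends: the paper does not write out a proof for this corollary, treating it as an immediate consequence of Theorem~\ref{thm:LLL} (and the preceding corollary), and your case split on whether $\ad/\log\ad$ exceeds $\log^{O(1)}\log n$ is exactly how one instantiates the trade-off to obtain both the exponential-criterion branch and the polynomial-criterion small-$\ad$ branch.
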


This result gives $\log^{O(1)}\log n$-round LLL for a substantially wider regime than previously known. Previously, such a round complexity required $d = \log^{O(1)}\log n$ \cite{FG17,GHK18,RG20} or $p \le 2^{-\frac{\log n}{\log^{O(1)}\log n}}$ \cite{CPS17} (and note that once $p\le n^{-2}$, LLL is trivially $0$-round solvable). Furthermore, this round complexity is within a polynomial factor of optimality, since the lower bound of Brandt et al. \cite{BF+16} demonstrates that LLL requires $\Omega(\log\log n)$ rounds of \rl\ even in constant-degree graphs and under the weaker criterion $p\le 2^d$.

\section{Defective Vertex and Edge Coloring}\label{sec:def}

Our first application of Theorem \ref{thm:LLL} will be for defective (vertex) coloring. We will then apply a very similar argument to an edge coloring variant of the problem, which will be used in our final application to $\Delta+o(\Delta)$ edge coloring in Section \ref{sec:edgecolor}.

\subsection{Defective Vertex Coloring}
Defective coloring is a relaxation of proper coloring, in which vertices are required to have few (rather than no) neighbors of the same color as themselves. The specific version we will solve is the following:

\begin{definition}
	An $(x,q)$-defective coloring of a graph $G$ with maximum degree $\Delta$ is a $\frac{\Delta}{x}$-coloring of nodes such that each node has fewer than $x + x/q$ neighbors of its own color. 
\end{definition}

This definition is phrased in a slightly non-standard way, for ease of notation in our applications. Our $(x,q)$-defective coloring definition corresponds to a `$\frac{\Delta}{x}$-coloring of defect $x + x/q$' in more standard terms.

We will first find a defective coloring with two colors, and then iterate to increase the number of colors:

\begin{lemma}\label{lem:def1}
For any $q=o(\sqrt\frac{\Delta}{\log^3\Delta})$, a $(\frac{\Delta}{2},2q\log\Delta)$-defective coloring can be found in $O((q \log \Delta)^2 + \log^{O(1)}\log n)$ rounds of \rl, succeeding with high probability in $n$.
\end{lemma}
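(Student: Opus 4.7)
The plan is to formulate the defective $2$-coloring as an LLL instance and invoke Theorem~\ref{thm:LLL}. For each vertex $v\in V$, introduce a variable $c(v)$ uniformly distributed on $\{1,2\}$ and allocate it to an event $_vA$, which is defined to be the event that $v$ has at least $\frac{\Delta}{2}+\frac{\Delta}{4q\log\Delta}$ neighbors of its own color (so avoiding all bad events yields a $(\frac{\Delta}{2},2q\log\Delta)$-defective coloring). Under this allocation, $_vA$ and $_wA$ share an edge in $\aG$ iff $\{v,w\}\in E$, so $\ad=\Delta$; while they share an edge in $G^\ents$ iff $\mathrm{dist}_G(v,w)\le 2$, so $d\le\Delta^2$.

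Next I would upper-bound $p=\max_v \Prob{_vA}$ by a Chernoff bound. Conditioning on $c(v)$, the number of same-color neighbors $X$ of $v$ is a sum of at most $\Delta$ independent Bernoulli$(1/2)$ variables, with $\Exp{X}\le \Delta/2$. Applying the standard multiplicative Chernoff bound with deviation parameter $\delta=\frac{1}{2q\log\Delta}$ gives
\[
\Prob{X\ge \tfrac{\Delta}{2}+\tfrac{\Delta}{4q\log\Delta}}\le \exp\!\left(-\tfrac{\delta^2\mu}{3}\right)\le \exp\!\left(-\tfrac{\Delta}{24\,q^2\log^2\Delta}\right)\le 2^{-\Omega(\Delta/(q^2\log^2\Delta))}.
\]

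I would then apply Theorem~\ref{thm:LLL} with $r=\Theta(q^2\log^2\Delta)$, chosen so that the bound on $p$ above implies the criterion $p\le 2^{-c\ad/r}=2^{-c\Delta/r}$ for the required constant $c$. The assumption $q=o\!\bigl(\sqrt{\Delta/\log^3\Delta}\bigr)$ is exactly what ensures $r=O(q^2\log^2\Delta)\le \ad/\log\ad=\Delta/\log\Delta$, so Theorem~\ref{thm:LLL} applies. Its round complexity is $O(r+\log^{O(1)}\log n)=O((q\log\Delta)^2+\log^{O(1)}\log n)$, matching the claim.

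The only real obstacle is calibrating the Chernoff exponent against the LLL criterion: we need the deviation $\delta$ small enough that the resulting $r$ is still $O(\ad/\log\ad)$, but large enough that $r$ remains in the regime where the $(q\log\Delta)^2$ term dominates. The condition on $q$ in the hypothesis is precisely what makes this balance possible, and having $\ad=\Delta$ (rather than $d\le\Delta^2$) in the exponent of the criterion—an advantage provided by the allocated-LLL-graph refinement in Theorem~\ref{thm:LLL}—is what keeps $r$ small enough without forcing a stronger assumption on $q$.
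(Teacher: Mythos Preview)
Your proposal is correct and follows essentially the same approach as the paper: set up the natural LLL instance with one color variable per vertex allocated to that vertex's bad event, bound $p$ via Chernoff to get $p\le e^{-\Omega(\Delta/(q\log\Delta)^2)}$, and then invoke Theorem~\ref{thm:LLL} with $r=\Theta((q\log\Delta)^2)$, using the hypothesis on $q$ to ensure $r\le \ad/\log\ad$. One minor point: your intermediate inequality $\exp(-\delta^2\mu/3)\le \exp(-\Delta/(24q^2\log^2\Delta))$ tacitly uses $\mu\ge\Delta/2$, whereas you only have $\mu\le\Delta/2$; the paper handles this by taking the deviation parameter to be $\frac{\Delta}{4q\mu\log\Delta}$ (so the $\mu$ in the exponent cancels favorably), but your final $2^{-\Omega(\Delta/(q^2\log^2\Delta))}$ is correct regardless since vertices with $\mu<\Delta/4$ cannot trigger the bad event at all.
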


\begin{proof}
A $(\frac{\Delta}{2},q\log\Delta)$-defective coloring is a coloring using only $2$ colors, such that every node has fewer than $\frac{\Delta}{2}+\frac{\Delta}{4q\log\Delta}$ neighbors of its own color. We set up an LLL instance for defective coloring very similarly to that for light partition previously:

\begin{itemize}
	\item The vertex set corresponds to that of the input graph ($\ents = \{_vA:v\in V\}$).
	\item The set of variables $\Vars$ will consist of one variable $color(v)$ for each $v\in V$, and this variable will be allocated to $_vA$ (i.e. $\vars(_vA) = \{color(v)\}$).
	\item These variables $color(v)$ are each uniformly distributed in $\{0,1\}$.
	\item The \emph{bad event} $_vA$ is that $v$ has at least $\frac{\Delta}{2}+\frac{\Delta}{4q\log\Delta}$  neighbors of its own color (so avoiding all bad events would imply a $(\frac{\Delta}{2},2q\log\Delta)$-defective coloring).
	\item Events $_vA\ne$ $_wA$ are therefore adjacent in $\aG$ iff $v$ and $w$ are adjacent in $G$ (i.e. $\ad=\Delta$).
	\item Events $_vA\ne$ $_wA$ are adjacent in $G^{\ents}$ iff $dist_G(v,w)\le 2$ (i.e. $d< 2\Delta^2$).
\end{itemize}  

Let $z_v$ denote the number of neighbors of $v$ that choose the same color as $v$ when the variables are sampled. The expectation of $z_v$ is $\mu:=deg(v)/2 \le \Delta/2$.  By a Chernoff bound, we obtain an upper bound $p$ on the probability of a bad event:

\begin{align*}
\Prob{z_v\ge \frac{\Delta}{2}+\frac{\Delta}{4q\log\Delta}}&\le \Prob{z_v\ge (1+\frac{\Delta}{4q\mu \log \Delta})\mu }\\
&\le e^{-\frac{\Delta^2 \mu}{3(4q\mu\log \Delta)^2}} = e^{-\frac{\Delta^2 }{48 \mu (q \log \Delta)^2}}\\
&\le e^{-\frac{\Delta }{24 (q \log \Delta)^2}}\enspace.
\end{align*}

So, our LLL instance has $p=e^{-\frac{\Delta }{24 (q \log \Delta)^2}}$ and $\ad = \Delta$. By Theorem \ref{thm:LLL}, therefore, it can be solved in $O((q \log \Delta)^2 + \log^{O(1)}\log n)$ rounds, so long as $(q \log \Delta)^2 = o(\frac{\Delta}{\log\Delta})$, i.e. $q=o(\sqrt\frac{\Delta}{\log^3\Delta})$.
\end{proof}

To color using more colors, we simply iterate this $2$-coloring procedure to iteratively subdivide the color classes.

\begin{lemma}\label{lem:def2}
For any $q\le \sqrt\frac{\Delta}{\log^4\Delta}$, a $(\Theta(q^2\log^4 q),q)$-defective coloring can be found in $O(q^2\log^3\Delta + \log\Delta\log^{O(1)}\log n)$ rounds of \rl, succeeding with high probability in $n$.
\end{lemma}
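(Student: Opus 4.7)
The plan is to iteratively apply Lemma \ref{lem:def1} to subdivide the color classes produced at the previous stage. Because every intermediate color class induces a vertex-disjoint subgraph, the 2-coloring procedures for all current classes can be executed in parallel, so the per-stage round complexity is just a single application of Lemma \ref{lem:def1}.

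Let $\Delta_i$ denote an upper bound on the maximum same-color degree after the $i$-th stage, with $\Delta_0 = \Delta$. Invoking Lemma \ref{lem:def1} with parameter $p_i$ at stage $i$ costs $O((p_i\log\Delta_i)^2 + \log^{O(1)}\log n)$ rounds and yields $\Delta_{i+1} \le \Delta_i/2 + \Delta_i/(4 p_i \log\Delta_i)$. After $k = O(\log(\Delta/(q^2\log^4 q))) = O(\log \Delta)$ stages, $\Delta_k$ reaches the target $\Theta(q^2\log^4 q)$ and we stop; the resulting coloring uses $\Delta/\Theta(q^2\log^4 q)$ colors as required.

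The delicate step is the choice of $p_i$. My plan is to select $p_i$ so that $p_i\log\Delta_i$ stays of order $q\log\Delta$, essentially constant across stages. Writing $\Delta_k = (\Delta/2^k)\prod_i(1 + 1/(2 p_i \log\Delta_i))$, this choice gives $\sum_i 1/(p_i\log\Delta_i) = O(k/(q\log\Delta)) = O(1/q)$, so the accumulated multiplicative blow-up is $1+O(1/q)$, which can be pushed below the $(1+1/q)$ defect budget by tuning the leading constant in $p_i$. On the round side, $\sum_i (p_i\log\Delta_i)^2 = O(k\cdot(q\log\Delta)^2) = O(q^2\log^3\Delta)$ for the LLL samplings, and the $k$ shattering calls contribute $O(k\cdot\log^{O(1)}\log n) = O(\log\Delta\cdot\log^{O(1)}\log n)$, matching the claimed complexity.

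The main obstacle is verifying the precondition of Lemma \ref{lem:def1}, namely $p_i = o(\sqrt{\Delta_i/\log^3\Delta_i})$, uniformly across all stages under this parameter choice. Substituting $p_i = \Theta(q\log\Delta/\log\Delta_i)$, the condition becomes $q^2 \log^2\Delta\cdot\log\Delta_i = o(\Delta_i)$, which must be checked against the stage-by-stage values of $\Delta_i$ down to $\Delta_k = \Theta(q^2\log^4 q)$. Since $\Delta_i$ decreases geometrically, the tightest inequality occurs at the final stage and reduces to an inequality on $\log\Delta$ versus $\log q$ that follows from the lemma's hypothesis $q\le\sqrt{\Delta/\log^4\Delta}$. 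Once this precondition is established throughout, composing the per-stage guarantees yields the claimed $(\Theta(q^2\log^4 q),q)$-defective coloring within the stated round complexity.
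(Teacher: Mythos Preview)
Your iterative scheme---repeatedly applying Lemma~\ref{lem:def1} in parallel on the current color classes and halting once the per-class degree reaches $\Theta(q^2\log^4 q)$---is the paper's approach, and your defect and round-count bookkeeping are correct. The gap is in the precondition check for Lemma~\ref{lem:def1} at the late stages.

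With your choice $p_i = \Theta(q\log\Delta/\log\Delta_i)$, the requirement $p_i = o\bigl(\sqrt{\Delta_i/\log^3\Delta_i}\bigr)$ is equivalent to $(q\log\Delta)^2 = o(\Delta_i/\log\Delta_i)$. At the final stage $\Delta_i = \Theta(q^2\log^4 q)$, so $\log\Delta_i = \Theta(\log q)$, and the condition becomes $\log^2\Delta = o(\log^3 q)$. You assert this ``follows from the lemma's hypothesis $q\le\sqrt{\Delta/\log^4\Delta}$,'' but that hypothesis only \emph{upper}-bounds $q$; what you need here is a \emph{lower} bound on $\log q$ relative to $\log\Delta$. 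Take, for instance, $q=\log^{10}\Delta$ (well within the hypothesis for large $\Delta$): then $\log^3 q = \Theta((\log\log\Delta)^3)\ll \log^2\Delta$, and the precondition fails outright. So Lemma~\ref{lem:def1} cannot be invoked at the last several stages, and the argument breaks down across most of the range of $q$ the lemma is supposed to cover.

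For comparison, the paper keeps the Lemma~\ref{lem:def1} parameter fixed at $q$ in every iteration. That makes the precondition easy---$q = o\bigl(\sqrt{\Delta_i/\log^3\Delta_i}\bigr)$ does hold all the way down to $\Delta_i = \Theta(q^2\log^4 q)$, and this is exactly why the factor $\log^4 q$ appears in the target defect. Your varying-$p_i$ device was engineered to make the per-step defect contribution a clean $\Theta(1/(q\log\Delta))$, but it trades the delicacy in the defect accounting for a delicacy in the precondition, and there it does not go through.
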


\begin{proof}
We iterate $\log\Delta - \log (q^2\log^4 q)$ times (our reason for requiring $q\le \sqrt\frac{\Delta}{\log^4\Delta}$ is so that this is positive). In each iteration, we use  Lemma \ref{lem:def1} to split each color class into two new color classes, resulting eventually in $2^{\log\Delta - \log (q^2\log^4 q)} = \frac{\Delta}{q^2\log^4 q}$ colors (as required for a $(q^2\log^4 q,q)$-defective coloring). We run the LLL instances for each of the current color classes simultaneously. This can be done since their induced graphs are entirely disjoint, and do not affect each other (no two nodes in different color classes can ever be colored the same color later). 

We will prove the following by induction: after $i$ iterations of Lemma \ref{lem:def1} in this way, the resulting coloring is a $(\frac{\Delta}{2^i},\frac 1i q\log\Delta)$-defective coloring.

As a base case, this claim is clearly true for $i=1$ since it is weaker than the statement of Lemma \ref{lem:def1}.

To prove the inductive step, in iteration $i$, we start with a $(\frac{\Delta}{2^{i-1}},\frac {1}{i-1}q\log\Delta)$-defective coloring and apply Lemma \ref{lem:def1} on each of the induced graphs of the current color classes. However, by the definition of an $(\frac{\Delta}{2^{i-1}},\frac {1}{i-1}q\log\Delta)$-defective coloring, the maximum degree within these induced graphs is $\Delta_i := \frac{\Delta}{2^{i-1}}+\frac{\Delta(i-1)}{2^{i-1} q\log\Delta}$, so we can use this quantity in place of $\Delta$ as the maximum degree in Lemma \ref{lem:def1}. Note that for any $i\le \log\Delta - \log (q^2\log^4 q)$, we have $\Delta_i \ge 2 q^2\log^4 q$, and therefore we indeed have $q=o(\sqrt\frac{\Delta_i}{\log^3\Delta'_i})$ as required in Lemma \ref{lem:def1}.

So, in iteration $i$, Lemma \ref{lem:def1} divides each color class into two (giving $2^i$ total colors) in such a way that every node has at most $\frac{\Delta_i}{2}+\frac{\Delta_i}{4q\log\Delta}$ neighbors of its color. Since

\begin{align*}
	\frac{\Delta_i}{2}+\frac{\Delta_i}{4q\log\Delta} &=
 \frac{\frac{\Delta}{2^{i-1}}+\frac{\Delta(i-1)}{2^{i-1} q\log\Delta}}{2}+\frac{\frac{\Delta}{2^{i-1}}+\frac{\Delta(i-1)}{2^{i-1} q\log\Delta}}{4q\log\Delta}\\
	&=
	\frac{\Delta}{2^i}+	\frac{\Delta(i-1)}{2^{i}q\log\Delta}
	+\frac{\Delta}{2^{i+1}q\log\Delta}+\frac{\Delta(i-1)}{2^{i+1}(q\log\Delta)^2}\\
	&\le
\frac{\Delta}{2^i}+	\frac{\Delta(i-1)}{2^{i}q\log\Delta}
+\frac{\Delta}{2^{i}q\log\Delta}\\
&=\frac{\Delta}{2^i}+	\frac{\Delta i}{2^iq\log\Delta}\enspace,
\end{align*}
this meets the criterion of a $(\frac{\Delta}{2^i},\frac 1i q\log\Delta)$-defective coloring, proving the claim by induction.

So, after $\log\Delta - \log (q^2\log^4 q)$ iterations, we have a $(q^2\log^4 q,\frac {1}{\log\Delta - \log (q^2\log^4 q)} q\log\Delta)$-defective coloring. This is therefore a $(q^2\log^4 q,q)$-defective coloring (since decreasing the second parameter weakens the requirement). Note that we have been assuming that the number of iterations $\log\Delta - \log (q^2\log^4 q)$ was an integer; if not, we simply round it, and this is what causes the statement of the lemma to give a $(\Theta(q^2\log^4 q),q)$-defective coloring rather than $(q^2\log^4 q,q)$.

We have run fewer than $\log\Delta$ iterations, each taking $O((q \log \Delta)^2 + \log^{O(1)}\log n)$ rounds of \rl by Lemma \ref{lem:def1}. The total round complexity is therefore $\log\Delta\cdot O((q \log \Delta)^2 + \log^{O(1)}\log n) = O(q^2\log^3 \Delta + \log\Delta \log^{O(1)}\log n)$. Each of these iterations succeeds with high probability in $n$, and therefore by taking a union bound over the failure probability in all iterations, we achieve high probability overall success.

\end{proof}

\subsection{Defective Edge Coloring}
A very similar application is that of defective edge coloring. A defective (vertex) coloring can be thought of as a vertex coloring such that the degree of the graph induced by each color class is low. We can similarly aim to color the \emph{edges} such that the degree of the graph induced by each color class is low:

\begin{definition}
	An $(x,q)$-defective edge coloring of a graph $G$ with maximum degree $\Delta$ is a $\frac{\Delta}{x}$-coloring of edges such that each node has fewer than $x + x/q$ adjacent edges of any color. 
\end{definition}

Again, we first find a defective edge coloring with two colors, and then iterate to increase the number of colors:

\begin{lemma}\label{lem:defe1}
	For any $q=o(\sqrt\frac{\Delta}{\log^3\Delta})$, a $(\frac{\Delta}{2},2q\log\Delta)$-defective edge coloring can be found in $O((q \log \Delta)^2 + \log^{O(1)}\log n)$ rounds of \rl, succeeding with high probability in $n$.
\end{lemma}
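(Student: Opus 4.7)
The plan is to set up an LLL instance directly mirroring the one used in the proof of Lemma \ref{lem:def1}, but with events associated to vertices and variables associated to edges. Specifically, for each vertex $v \in V$ introduce an event $_vA$, and for each edge $e \in E$ a variable $color(e)$ uniformly distributed in $\{0,1\}$. Allocate each edge variable to one of its two endpoints arbitrarily, so $\vars(_vA)$ is the set of edges allocated to $v$ while $\Vars(_vA)$ is the set of all edges incident to $v$. The bad event $_vA$ fires iff $v$ has at least $\frac{\Delta}{2}+\frac{\Delta}{4q\log\Delta}$ adjacent edges of some color, so avoiding all $_vA$ yields a $(\frac{\Delta}{2},2q\log\Delta)$-defective edge coloring.

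Two events $_vA,\ _wA$ share a variable iff $\{v,w\}\in E$, and since the shared edge is allocated to one of the two endpoints, the corresponding edge is already present in $\aG$. Hence $\ad = d = \Delta$. For each vertex $v$ and color $c \in \{0,1\}$, the number of $v$'s adjacent edges of color $c$ has expectation at most $\Delta/2$, so by exactly the Chernoff calculation of Lemma \ref{lem:def1}, followed by a trivial union bound over the two colors, I would obtain $p \le 2 e^{-\Omega(\Delta/(q\log\Delta)^2)}$.

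I would then invoke Theorem \ref{thm:LLL} with $r = \Theta((q\log\Delta)^2)$ and $\ad = \Delta$, choosing the hidden constant in $r$ large enough that the criterion $p \le 2^{-c\ad/r}$ is satisfied. This yields round complexity $O(r + \log^{O(1)}\log n) = O((q\log\Delta)^2 + \log^{O(1)}\log n)$. The hypothesis $q = o(\sqrt{\Delta/\log^3\Delta})$ ensures $r = o(\ad/\log\ad)$, which is precisely the regime in which Theorem \ref{thm:LLL} applies.

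Structurally the argument runs in parallel with Lemma \ref{lem:def1}, and if anything the edge-coloring formulation is slightly cleaner: the natural edge-to-endpoint allocation keeps $d$ no larger than $\Delta$, whereas the vertex-coloring version has $d=\Theta(\Delta^2)$. I therefore do not expect any genuine obstacle beyond the routine Chernoff/union-bound bookkeeping and careful matching of constants so that the LLL criterion of Theorem \ref{thm:LLL} is met at the chosen value of $r$.
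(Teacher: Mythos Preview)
Your argument is correct, but it takes a genuinely different route from the paper. The paper associates an event $_eA$ to each \emph{edge} $e$, with the single variable $color(e)$ allocated to $_eA$; the bad event $_eA$ is that some endpoint of $e$ sees too many edges of $e$'s color. This yields $\ad = 2\Delta-1$ and $d < 4\Delta^2$, and the Chernoff bound is done per endpoint with a small adjustment for the edge $e$ itself. You instead associate events to \emph{vertices} and allocate each edge-variable to one endpoint; since two vertex-events share a variable exactly when the vertices are adjacent, and that shared edge is allocated to one of them, you get $\ad = d = \Delta$ directly. Your vertex-based formulation is the cleaner one here: it avoids the line-graph blowup, keeps $d$ linear in $\Delta$, and the Chernoff/union-bound bookkeeping is essentially identical to Lemma~\ref{lem:def1}. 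Both approaches feed into Theorem~\ref{thm:LLL} with $\ad = \Theta(\Delta)$ and $r = \Theta((q\log\Delta)^2)$, so the resulting round complexity and the role of the hypothesis $q = o(\sqrt{\Delta/\log^3\Delta})$ are the same.
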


\begin{proof}
	A $(\frac{\Delta}{2},q\log\Delta)$-defective edge coloring is an edge coloring using only $2$ colors, such that every node has fewer than $\frac{\Delta}{2}+\frac{\Delta}{4q\log\Delta}$ adjacent edges of any color. We set up an LLL instance for defective edge coloring:
	
	\begin{itemize}
		\item The vertex set corresponds to the edge set of the input graph ($\ents = \{_eA:e\in E\}$).
		\item The set of variables $\Vars$ will consist of one variable $color(e)$ for each $e\in E$, and this variable will be allocated to $_eA$ (i.e. $\vars(_eA) = \{color(e)\}$).
		\item These variables $color(e)$ are each uniformly distributed in $\{0,1\}$.
		\item The \emph{bad event} $_eA$ is that either endpoint of $e$ has at least $\frac{\Delta}{2}+\frac{\Delta}{4q\log\Delta}$  adjacent edges of $e$'s color (so avoiding all bad events would imply a $(\frac{\Delta}{2},2q\log\Delta)$-defective edge coloring).
		\item Events $_eA\ne$ $_fA$ are therefore adjacent in $\aG$ iff $e$ and $f$ are adjacent in $G$ (i.e. $\ad=2\Delta-1$).
		\item Events $_eA\ne$ $_fA$ are adjacent in $G^{\ents}$ iff $dist_G(e,f)\le 2$ (i.e. $d< 4\Delta^2$).
	\end{itemize}  
	
Fix an edge $e=\{v,u\}$ and a particular endpoint $v$ of $e$ to consider. Let $z_e$ denote the number of (other) adjacent edges of $v$ that choose the same color as $e$ when the variables are sampled. The expectation of $z_v$ is $\mu \le \Delta/2$.  By a Chernoff bound, we obtain an upper bound $p$ on the probability that $v$ has more than $\frac{\Delta}{2}+\frac{\Delta}{5q\log\Delta}$ adjacent edges sharing $e$'s color:
	
	\begin{align*}
		\Prob{z_v\ge \frac{\Delta}{2}+\frac{\Delta}{5q\log\Delta}}&\le \Prob{z_v\ge (1+\frac{\Delta}{5q\mu \log \Delta})\mu }\\
		&\le e^{-\frac{\Delta^2 \mu}{3(5q\mu\log \Delta)^2}} = e^{-\frac{\Delta^2 }{75 \mu (q \log \Delta)^2}}\\
		&\le e^{-\frac{\Delta }{38 (q \log \Delta)^2}}\enspace.
	\end{align*}
	
Therefore, with probability least $1-e^{-\frac{\Delta }{38 (q \log \Delta)^2}}$, $v$ has fewer than $\frac{\Delta}{2}+\frac{\Delta}{5q\log\Delta}$ adjacent edges other than $e$ sharing $e$'s color, and so fewer than $\frac{\Delta}{2}+\frac{\Delta}{5q\log\Delta}+1< \frac{\Delta}{2}+\frac{\Delta}{4q\log\Delta}$ including $e$ itself. The same applies without loss of generality to the other endpoint $u$ of $e$. So, the total probability of the bad event $_eA$ is at most $2e^{-\frac{\Delta }{38 (q \log \Delta)^2}}$.
	
Then, our LLL instance has $p=2e^{-\frac{\Delta }{38 (q \log \Delta)^2}}$ and $\ad = 2\Delta-1 = O(\Delta)$. By Theorem \ref{thm:LLL}, therefore, it can be solved in $O((q \log \Delta)^2 + \log^{O(1)}\log n)$ rounds, so long as $(q \log \Delta)^2 = o(\frac{\Delta}{\log\Delta})$, i.e. $q=o(\sqrt\frac{\Delta}{\log^3\Delta})$.
\end{proof}

To reach the defective edge coloring we require for our subsequent application to $\Delta+o(\Delta)$ edge coloring, we iterate this $2$-coloring in the same way as for defective vertex coloring.

\begin{lemma}\label{lem:defe2}
	For any $q\le \sqrt\frac{\Delta}{\log^4\Delta}$, a $(\Theta(q^2\log^4 q),q)$-defective edge coloring can be found in $O(q^2\log^3\Delta + \log\Delta\log^{O(1)}\log n)$ rounds of \rl, succeeding with high probability in $n$.
\end{lemma}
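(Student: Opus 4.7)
\begin{proofs}
The plan is to mirror the proof of Lemma~\ref{lem:def2} almost exactly, replacing the $2$-color defective vertex coloring subroutine (Lemma~\ref{lem:def1}) with the corresponding defective edge coloring subroutine (Lemma~\ref{lem:defe1}). Concretely, I would iteratively split each current color class into two sub-classes by running Lemma~\ref{lem:defe1} in parallel on every class, terminating once the color classes are small enough to yield the claimed $(\Theta(q^2\log^4 q),q)$-defective edge coloring.

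The key inductive claim to establish is the direct analog of the one used in Lemma~\ref{lem:def2}: after $i$ iterations, the current edge coloring is a $(\Delta/2^i,\tfrac{1}{i}q\log\Delta)$-defective edge coloring. The base case $i=1$ is immediate from Lemma~\ref{lem:defe1}. For the inductive step, I would set $\Delta_i := \Delta/2^{i-1} + \Delta(i-1)/(2^{i-1}q\log\Delta)$, which is an upper bound on the maximum degree of the subgraph induced by any single color class at the start of iteration $i$. Applying Lemma~\ref{lem:defe1} with $\Delta_i$ in place of $\Delta$ (the condition $q = o(\sqrt{\Delta_i/\log^3\Delta_i})$ holds because $\Delta_i \ge 2q^2\log^4 q$ throughout the $\log\Delta - \log(q^2\log^4 q)$ iterations we perform, by our assumption $q\le \sqrt{\Delta/\log^4\Delta}$) splits the class into two halves so that each node has at most $\Delta_i/2 + \Delta_i/(4q\log\Delta)$ same-color neighbors in the new coloring; the same arithmetic as in Lemma~\ref{lem:def2} shows this is at most $\Delta/2^i + \Delta i/(2^i q\log\Delta)$, closing the induction. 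After $\log\Delta - \log(q^2\log^4 q)$ iterations we obtain a $(q^2\log^4 q, q\log\Delta/(\log\Delta - \log(q^2\log^4 q)))$-defective edge coloring, which in particular satisfies the $(\Theta(q^2\log^4 q),q)$ requirement (with rounding of the iteration count accounting for the $\Theta(\cdot)$ rather than exact $q^2\log^4q$).

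The main subtlety, and the only place where the edge-coloring version differs conceptually from the vertex-coloring version, is justifying that the LLL instances for different color classes can be run simultaneously. In the vertex case this was immediate because different color classes are vertex-disjoint; for edges, different color classes are edge-disjoint but may share vertices. However, each edge carries exactly one color, so the variable $color(e)$ used in the Lemma~\ref{lem:defe1} instance belongs to exactly one class, and the bad events of that instance depend only on the variables of edges in the same class. Hence the LLL instances across color classes operate on disjoint variable sets and do not interfere, so they may indeed be executed in parallel within the \local\ model.

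Finally, the round complexity follows by summing: fewer than $\log\Delta$ iterations, each costing $O((q\log\Delta)^2 + \log^{O(1)}\log n)$ rounds by Lemma~\ref{lem:defe1} (using the crude bound $\Delta_i \le \Delta$ inside the Chernoff term), gives the stated $O(q^2\log^3\Delta + \log\Delta \cdot \log^{O(1)}\log n)$. Taking a union bound over the failure probabilities of the $O(\log\Delta)$ iterations preserves high probability in $n$. The hardest part of the write-up is simply checking that the parameters $\Delta_i$ stay in the regime required by Lemma~\ref{lem:defe1} throughout the iteration; the argument is otherwise a verbatim copy of Lemma~\ref{lem:def2}.
\end{proofs}
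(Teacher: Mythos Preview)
Your proposal is correct and follows essentially the same approach as the paper's own proof, which is likewise a near-verbatim transcription of the Lemma~\ref{lem:def2} argument with Lemma~\ref{lem:defe1} substituted for Lemma~\ref{lem:def1}. Your justification for running the per-class LLL instances in parallel is in fact slightly more careful than the paper's, since you explicitly note that edge-disjoint color classes may share vertices but still have disjoint variable sets.
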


The proof is essentially identical to that of Lemma \ref{lem:def2}, so we defer it to the appendix.

\section{$\Delta+o(\Delta)$ Edge Coloring}\label{sec:edgecolor}

We now reach our final application: edge coloring a graph using fewer colors than required by a greedy algorithm, as stated in Theorem \ref{thm:edgecolor}.

\begin{proof}[Proof of Theorem \ref{thm:edgecolor}]
Our aim is to employ the edge coloring algorithm of Chang et al. \cite{CHLPU19}, which solves the problem using a series of LLL calls to carefully control certain properties while coloring. Using existing LLL algorithms, though, results in round complexities that are either $poly(\Delta,\log\log n)$ (Fischer and Ghaffari's algorithm \cite{FG17}), or $poly(\log n)$ (Chung, Pettie, and Su's algorithm \cite{CPS17}). To combat this, we will apply our defective edge coloring result to reduce to instances with $poly(1/\eps)$ maximum degree, to which applying \cite{CHLPU19} armed with our LLL algorithm (or indeed that of \cite{FG17}) then takes only $poly(1/\eps, \log\log n)$ rounds complexity. This yields an improvement in overall round complexity for the most difficult regime when $\Delta$ is between roughly $\log\log n$ and $\log n$.

Firstly, note that the algorithm of Chang et al. \cite{CHLPU19} equipped with Fischer and Ghaffari's LLL algorithm \cite{FG17} (or our own Theorem \ref{thm:LLL}) already achieves $poly(1/\eps, \log\log n)$ round complexity when $1/\eps = \Delta^{\Omega(1)}$, since the resulting $poly(\Delta, \log\log n)$ round complexity is $poly(1/\eps, \log\log n)$. Secondly, the bound is also already met when $1/\eps = \Delta^{o(1)}$ and $\Delta \ge \log^{2}n$: in this case, the bad events occur with probability at most $exp(-\eps^2 \Delta/ \log^{4+o(1)} \Delta) = n^{-\omega(1)}$, and therefore all are avoided with high probability in $n$ upon initially sampling the variables, with no LLL algorithm required. The round complexity of \cite{CHLPU19} is then $O(\log(1/\eps) + \log^* n)$, which is much lower than $poly(1/\eps, \log\log n)$. So, it remains to prove the theorem for when $\Delta<\log^2 n$ and $1/\eps = \Delta^{o(1)}$, which we will henceforth assume.

We first employ a defective edge coloring in order to reduce the degree of the instances we need to consider. Using Lemma \ref{lem:defe2}, we obtain $(cq^2\log^4 q,q)$-defective edge coloring (for some constant $c$) of the input graph $G$, with $q=\eps^{-2}$ (noting that $q=\Delta^{o(1)} \le \sqrt\frac{\Delta}{\log^4\Delta}$ as required). This takes only $poly(1/\eps, \log\Delta, \log\log n) = poly(1/\eps, \log\log n)$ rounds of \rl. We call the resulting color classes \emph{buckets} (in order to avoid confusion with the final output coloring). The $(1+\eps)\Delta$ colors in the palette are divided lexicographically among the $\frac{\Delta}{cq^2\log^4 q}$ buckets, as equally as possible (i.e. with each bucket receiving either the floor or ceiling of $(1+\eps)\Delta\cdot \frac{cq^2\log^4 q}{\Delta}$ colors). Each edge in $E(G)$ will be colored with one of the colors in its bucket in the final output. This means that, from now on, we can treat the buckets separately and solve concurrently on their induced graphs, since no vertices in different buckets will ever cause a coloring conflict.

By the property of a defective coloring, the maximum degree within each bucket is now $\Delta':= cq^2\log^4 q + cq\log^4 q$. We define $\eps':= \frac{\eps}{2}$. The number of colors in each bucket is at least 

\begin{align*}
	(1+\eps)\Delta\cdot \frac{cq^2\log^4 q}{\Delta} - 1 &= cq^2\log^4 q + \eps cq^2 \log^4 q- 1\\
	&= cq^2\log^4 q +  cq^{1.5} \log^4 q- 1\\
	&\ge cq^2\log^4 q + \frac 12  cq^{1.5}\log^4 q +cq\log^4 q +   \frac{1}{2} cq^{0.5}\log^4 q \\
	&= (1+\frac{\eps}{2})(cq^2\log^4 q + cq\log^4 q)\\
	&= (1+\eps')\Delta' \enspace.
\end{align*}

We have now reduced the edge coloring instance to a collection of new instances with distinct palettes, each of which can therefore be solved concurrently. The maximum degree of each instance is $\Delta'=\Theta(q^2\log^4 q) = poly(1/\eps)$, and the number of colors available is $(1+\eps')\Delta'$, where $\eps'=q^{-1/2}/2 = \omega(\frac{\log^{2.5} \Delta'}{\sqrt{\Delta'}})$. So, we can apply the edge coloring algorithm of Chang et al. \cite{CHLPU19}, again equipped with the LLL algorithm of Theorem \ref{thm:LLL} or Fischer and Ghaffari \cite{FG17}, to solve each instance concurrently in $poly(\Delta',\log\log n) = poly(1/\eps, \log\log n)$ rounds. The total round complexity is therefore $poly(1/\eps,\log\log n)$.
\end{proof}

Theorem \ref{thm:edgecolor} clearly implies Corollary \ref{cor:edge}, though it only gives the best number of colors known for $\Delta$ up to around $\log n$; for higher $\Delta$, edge colorings with fewer colors were already known by using \cite{CHLPU19} either with the LLL algorithm of \cite{CPS17} or setting $\eps$ such that no LLL algorithm is required.

\begin{proof}[Proof of Corollary \ref{cor:edge}]
When $\Delta\le \log^8 n$, $(1+\eps)\Delta$ edge coloring can be performed in $\log^{O(1)}\log n$ rounds for any $\eps$ which is both $\omega(\frac{\log^{2.5}\Delta}{\sqrt{\Delta}})$ and $\log^{-O(1)}\log n$, by Theorem \ref{thm:edgecolor}. When $\Delta > \log^8 n$, $(1+\eps)\Delta$ edge coloring with any $\eps$ which is both $\log^{-O(1)}n$ and at least $\Delta^{-1/8}$ can be performed using the algorithm of Chang et al. \cite{CHLPU19} directly, since this is again in the regime where all bad events are avoided with high probability, with no LLL algorithm required. Then, the round complexity of \cite{CHLPU19} is $O(\log(1/\eps)+\log^* n) = O(\log\log n)$. 

This shows that $\Delta+o(\Delta)$ edge coloring can be performed in $ O(\log\log n)$ rounds over all ranges of $\Delta$. The best number of colors purely as a function of $\Delta$ that we obtain in this round complexity is $\Delta+\frac{\Delta}{\log^{O(1)} \Delta}$, though the bottleneck is only when $\Delta= \log^{\Theta(1)}n$, and for $\Delta$ outside this range fewer colors can be used.
\end{proof}

\section{Conclusions and Open Questions}
We have shown improved algorithms for the Lov\'asz Local Lemma, and an application to edge coloring, demonstration that graphs of maximum degree $\Delta$ can be $\Delta+o(\Delta)$ edge colored in $\log^{O(1)}\log n$ rounds of \rl. There well may be other applications of our results: in particular, the leading algorithm for $o(\Delta)$-coloring triangle-free graphs \cite{PS15} is similarly built on repeated calls to the Lov\'asz Local Lemma. We expect that adapting the problem to apply our LLL algorithm for resilient instances would improve the round complexity, and aim to do so in future work.

Many questions regarding the LLL still remain open. On graphs with $d=\log^{\omega(1)}\log n$, there is still a large gap between upper and lower bounds. LLL with polynomially-weakened criterion, for example, appears to become more difficult as $d$ increases (at least up to $\log^{1-\Omega(1)} n$), but the $\Omega(\log\log n)$-round lower bound of Brandt et al. \cite{BF+16} is proven on constant-degree graphs and does not increase with $d$. One can also ask for how strong an LLL criterion can we find a $\log^{O(1)}\log n$-round distributed algorithm over the whole range of $\Delta$. We have shown that the criterion $p\le \min \{d^{-c}, 2^{\frac{-d}{\log^{O(1)}\log n}}\}$ suffices (for some constant $c$), and it was already known \cite{CPS17} that $p \le \frac{1}{ed^2}\cdot 2^{-\frac{\log n}{\log^{O(1)}\log n}}\}$ also suffices. However, these criteria are a long way from the strongest criteria under which the LLL is solvable, and no $\log^{\omega(1)}\log n$-round lower bound is known for any solvable criterion.

The picture for edge coloring is also far from complete: there remain large gaps between upper and lower bounds on distributed complexity for most palette size regimes. For $2\Delta-1$-coloring, $\log^{O(1)}\log n$-round algorithms are known \cite{FGK17,GHK18,Harris19}, but the only lower bound is $\Omega(\log^* n)$ \cite{Linial92,Naor91}. Below $2\Delta-1$ colors, there is an $\Omega(\log_\Delta\log n)$-round lower bound \cite{CHLPU19}, and Corollary \ref{cor:edge} closes the corresponding upper bound to only a polynomial gap for some $\Delta+o(\Delta)$ number of colors\footnote{The precise amount depends on how $\Delta$ relates to $n$; see the proof of Corollary \ref{cor:edge}}. For even fewer colors, though, there is again a wide gap, with the best upper bound being the $poly(\Delta,\log n)$-round deterministic algorithm for $\Delta+1$ edge coloring of Bernshteyn \cite{Bernshteyn22}.

\section{Acknowledgements}
The author would like to thank Merav Parter and Artur Czumaj for illuminating discussions regarding the Lov\'asz Local Lemma.
\appendix

\bibliographystyle{alpha}

\bibliography{LLL}
\section{Appendix: Proof of Lemma \ref{lem:defe2}}
\begin{proof}[Proof of Lemma \ref{lem:defe2}]
	We again iterate $\log\Delta - \log (q^2\log^4 q)$ times as in proof of Lemma \ref{lem:defe2}. In each iteration, we use  Lemma \ref{lem:def1} to split each color class into two new color classes, resulting eventually in $2^{\log\Delta - \log (q^2\log^4 q)} = \frac{\Delta}{q^2\log^4 q}$ colors. We run the LLL instances for each of the current color classes simultaneously. This can be done since their induced graphs are entirely disjoint, and do not affect each other (no two edges in different color classes can ever be colored the same color later, and therefore the bad event at each edge depends only on the variables of edges in its current color class). 
	
	We will prove the following by induction: after $i$ iterations of Lemma \ref{lem:defe1} in this way, the resulting coloring is a $(\frac{\Delta}{2^i},\frac 1i q\log\Delta)$-defective edge coloring.
	
	As a base case, this claim is clearly true for $i=1$ since it is weaker than the statement of Lemma \ref{lem:defe1}.
	
	To prove the inductive step, in iteration $i$, we start with a $(\frac{\Delta}{2^{i-1}},\frac {1}{i-1}q\log\Delta)$-defective edge coloring and apply Lemma \ref{lem:def1} on each of the induced graphs of the current color classes. However, by the definition of an $(\frac{\Delta}{2^{i-1}},\frac {1}{i-1}q\log\Delta)$-defective edge coloring, the maximum degree within these induced graphs is $\Delta_i := \frac{\Delta}{2^{i-1}}+\frac{\Delta(i-1)}{2^{i-1} q\log\Delta}$, so we can use this quantity in place of $\Delta$ as the maximum degree in Lemma \ref{lem:def1}. Note that for any $i\le \log\Delta - \log (q^2\log^4 q)$, we have $\Delta_i \ge 2 q^2\log^4 q$, and therefore we indeed have $q=o(\sqrt\frac{\Delta_i}{\log^3\Delta'_i})$ as required in Lemma \ref{lem:def1}.
	
	So, in iteration $i$, Lemma \ref{lem:defe1} divides each color class into two (giving $2^i$ total colors) in such a way that every node has at most $\frac{\Delta_i}{2}+\frac{\Delta_i}{4q\log\Delta}$ adjacent edges of each color. Since
	
	\begin{align*}
		\frac{\Delta_i}{2}+\frac{\Delta_i}{4q\log\Delta} &=
		\frac{\frac{\Delta}{2^{i-1}}+\frac{\Delta(i-1)}{2^{i-1} q\log\Delta}}{2}+\frac{\frac{\Delta}{2^{i-1}}+\frac{\Delta(i-1)}{2^{i-1} q\log\Delta}}{4q\log\Delta}\\
		&=
		\frac{\Delta}{2^i}+	\frac{\Delta(i-1)}{2^{i}q\log\Delta}
		+\frac{\Delta}{2^{i+1}q\log\Delta}+\frac{\Delta(i-1)}{2^{i+1}(q\log\Delta)^2}\\
		&\le
		\frac{\Delta}{2^i}+	\frac{\Delta(i-1)}{2^{i}q\log\Delta}
		+\frac{\Delta}{2^{i}q\log\Delta}\\
		&=\frac{\Delta}{2^i}+	\frac{\Delta i}{2^iq\log\Delta}\enspace,
	\end{align*}
	this meets the criterion of a $(\frac{\Delta}{2^i},\frac 1i q\log\Delta)$-defective edge coloring, proving the claim by induction.
	
	So, after $\log\Delta - \log (q^2\log^4 q)$ iterations, we have a $(q^2\log^4 q,\frac {1}{\log\Delta - \log (q^2\log^4 q)} q\log\Delta)$-defective edge coloring. This is therefore a $(q^2\log^4 q,q)$-defective edge coloring (since decreasing the second parameter weakens the requirement). Again, the lemma statement is weakened to a $(\Theta(q^2\log^4 q),q)$-defective edge coloring since $\log\Delta - \log (q^2\log^4 q)$ may not have been an integer.
	
	We have run fewer than $\log\Delta$ iterations, each taking $O((q \log \Delta)^2 + \log^{O(1)}\log n)$ rounds of \rl by Lemma \ref{lem:defe1}. The total round complexity is therefore $\log\Delta\cdot O((q \log \Delta)^2 + \log^{O(1)}\log n) = O(q^2\log^3 \Delta + \log\Delta \log^{O(1)}\log n)$. Each of these iterations succeeds with high probability in $n$, and therefore by taking a union bound over the failure probability in all iterations, we achieve high probability overall success.
	
\end{proof}
\end{document}